\newcommand{\oset}[3][0ex]{%
  \mathrel{\mathop{#3}\limits^{
    \vbox to#1{\kern-2\ex@
    \hbox{$\scriptstyle#2$}\vss}}}}
\newcommand\scalemath[3]{\scalebox{#1}[#2]{\mbox{\ensuremath{\displaystyle #3}}}}
\newcommand{\leftarrowtip}{\ensuremath{\tikz\draw[line width=0.5pt,->] (10pt,0) -- (0,0);}}
\newcommand{\leftarrowtailnotip}{\ensuremath{\tikz\draw[line width=0.5pt,-<] (0,0) -- (10pt,0);}}
\newcommand{\unicodeStar}{\ensuremath{\star}}
\newcommand{\unicodecolon}{\ensuremath{\colon}}
\newcommand{\unicodeleftpar}{\ensuremath{\left(}}
\newcommand{\unicoderightpar}{\ensuremath{\right)}}
\newcommand{\unicoderightcircle}{\ensuremath{\RIGHTcircle}}
\newcommand{\unicodeleftcircle}{\ensuremath{\LEFTcircle}}
\newcommand{\unicodebbA}{\ensuremath{\mathbb{A}}}
\newcommand{\unicodebbB}{\ensuremath{\mathbb{B}}}
\newcommand{\unicodebbC}{\ensuremath{\mathbb{C}}}
\newcommand\UnicodeBlackboardP{\ensuremath{\mathbf{P}}} \DeclareUnicodeCharacter{2119}{\UnicodeBlackboardP}
\newcommand{\unicodecalS}{\ensuremath{\mathcal{S}}}
\newcommand{\unicodecalT}{\ensuremath{\mathcal{T}}}
\newcommand{\unicodecalC}{\ensuremath{\mathcal{C}}}
\newcommand{\unicodecalD}{\ensuremath{\mathcal{D}}}
\newcommand{\unicodecalX}{\ensuremath{\mathcal{X}}}
\newcommand{\unicodecalN}{\ensuremath{\mathcal{N}}}
\newcommand{\unicodecalE}{\ensuremath{\mathcal{E}}}
\newcommand{\hirayo}{\scaleobj{0.9}{\text{\usefont{U}{min}{m}{n}\symbol{'210}}}}
\DeclareFontFamily{U}{min}{}
\DeclareFontShape{U}{min}{m}{n}{<-> udmj30}{}
\newcommand\UnicodeWhiteRightPointingSmallTriangle{\triangleright}
\newcommand\UnicodeWhiteDownPointingSmallTriangle{\triangledown}
\newcommand\UnicodeWhiteUpPointingSmallTriangle{\scalemath{1}{-1}{{}^{\triangledown}}}
\newcommand\smallmath[2]{#1{\raisebox{\dimexpr \fontdimen 22 \textfont 2
      - \fontdimen 22 \scriptscriptfont 2 \relax}{$\scriptscriptstyle #2$}}}
\newcommand\smalloplus{\smallmath\mathbin\oplus}
\newcommand\mydots{\makebox[0.6em][c]{.\hfil.\hfil.}}
\newcommand{\UnicodeApprox}{\ensuremath{\approx}}
\newcommand{\unicodeRelationalComposition}{\fatsemi}
\newcommand{\unicodeSqleq}{\mathbin{\ensuremath{\sqsubseteq}}}
\newcommand{\unicodeLhd}{\mathbin{\triangleleft}}\DeclareUnicodeCharacter{22B2}{\unicodeLhd} %
\newcommand{\nicelinktarget}[1]{\Hy@raisedlink{\hypertarget{#1}{}}}
\newcommand\defining[1]{\nicelinktarget{#1}{}} %
\definecolor{nordred}{HTML}{bf616a}
\definecolor{bordeaux}{HTML}{821529}
\definecolor{bluelink}{HTML}{003399}
\definecolor{nordred}{HTML}{bf616a}
\definecolor{nordblue}{HTML}{81a1c1}
\definecolor{norddarkblue}{HTML}{5e81ac}
\definecolor{nordgreen}{HTML}{a3be8c}
\definecolor{nordnight}{HTML}{4c566a}
\newcommand{\partialMarkovCategory}{\kl{partial Markov category}}
\newcommand{\partialMarkovCategories}{\kl{partial Markov categories}}
\newcommand{\tensor}[1][\otimes]{#1} %
\newcommand{\dcomp}{\mathbin{\fatsemi}} %
\newcommand{\cat}[1]{\mathsf{#1}} %
\newcommand{\fun}[1]{\mathsf{\mathbf{#1}}} %
\newcommand{\Par}{\cat{Par}} %
\newcommand{\Rel}{\cat{Rel}} %
\newcommand\id{\mathrm{id}}
\newcommand{\swap}{\sigma} %
\renewcommand{\unicodeSqleq}{\mathbin{\kl[conditional inequality]{\ensuremath{\sqsubseteq}}}}
\renewcommand{\unicodeLhd}{\hyperlink{linkTriangleComposition}{\mathbin{\triangleleft}}}
\newcommand{\leqprob}{\unicodeSqleq} %
\newcommand{\conditionalInequality}{\hyperlink{linkSqleq}{conditional inequality}}
\newcommand{\condcomp}{\unicodeLhd} %
\newcommand{\normalisedby}{\mathbin{\kl[restriction]{\preceq}}}
\newcommand{\discretePartialMarkovCategory}{\kl{discrete partial Markov category}}
\newcommand{\copyDiscardCategory}{\kl{copy-discard category}}
\newcommand{\copier}{δ}
\newcommand{\comparator}{μ}
\newcommand{\comp}{\mathbin{\fatsemi}}
\newcommand{\leqprobw}[1]{\mathrel{\sqsubseteq^{#1}}}
\newcommand{\BorelSubStoch}{\cat{BorelStoch}_{\le 1}}
\newcommand{\FinSubStoch}{\cat{FinStoch}_{\le 1}}
\newcommand{\Real}{\mathbb{R}}
\begin{document}
\begin{frontmatter}

\title{Order in Partial Markov Categories}
\author{Elena Di Lavore\thanksref{a}}
\author{Mario Román\thanksref{a}}
\author{Paweł Sobociński\thanksref{b}}
\author{Márk Széles\thanksref{c}}
\address[a]{University of Oxford, UK}
\address[b]{Tallinn University of Technology, Estonia}
\address[c]{Radboud University Nijmegen, The Netherlands}

\begin{abstract}
	Partial Markov categories are a recent framework for categorical probability
	theory that provide an abstract account of partial probabilistic computation
	with updating semantics. In this article, we discuss two order relations on
	the morphisms of a partial Markov category. In particular, we prove that
	every partial Markov category is canonically preorder-enriched, recovering
	several well-known order enrichments. We also demonstrate that the existence
	of codiagonal maps (comparators) is closely related to order properties of
	partial Markov categories. Finally, we introduce a synthetic version of the
	Cauchy--Schwarz inequality and, from it, we prove that updating 
	increases
	validity.
\end{abstract}
\begin{keyword}
  Markov categories, preorder enrichment, string diagrams,
  probabilistic inference, copy-discard categories
\end{keyword} 

\end{frontmatter}

\section{Introduction}
\label{sec:introduction}

\kl{Markov categories} \cite{fritz2020synthetic,cho2019disintegration} are a 
synthetic framework for probability theory. They allow one to reason 
about probabilistic processes using a few basic axioms that model key 
aspects of probabilistic computation. Morphisms in a \kl{Markov category} 
can be composed both in sequence and in parallel, via a symmetric 
monoidal structure. Every object is equipped with a 
commutative comonoid, which allows copying and discarding
information. An important axiom requires the existence of 
"conditionals", an abstraction of regular conditional probability 
distributions.
This simple setting has been successfully applied to various areas in 
probability theory \cite{Fritz2024,FritzK23,Jacobs2021}, semantics 
of probabilistic programming 
\cite{stein2021structural,stein2021compositional,AckermanFKKMRSY24}	, 
causal inference 
\cite{jacobs2021causal,Lorenz2023,Yin2022}
and information theory \cite{Perrone24}.

Morphisms in \kl{Markov categories} represent total probabilistic
computations. Totality is encoded by an axiom -- naturality of 
discarding -- declaring that
maps can be discarded without affecting the meaning of the process. 
However, many operations in probability theory are inherently 
partial. An important example is bayesian updating: if one tries to
learn from evidence that is
incompatible with the prior belief, updating is impossible. \kl{Markov
categories} lack the structure to express this partiality.

In order to express partiality, we need to drop naturality of discarding, and
thus work with copy-discard (CD) categories 
\cite{cho2019disintegration,corradini1999algebraic}.
Morphisms in such copy-discard categories correspond to partial probabilistic
computations. We would like to highlight two recent works following 
this approach. In \cite{Lorenz2023}, the authors start from a 
copy-discard category, and derive
conditional distributions from equality comparison and \kl{normalisation} of
subprobability kernels. While the theory is elegant, it is mostly suited for
discrete probability. In contrast, the framework of \kl{partial Markov categories}
\cite{2023partialmarkov} starts from a \kl{copy-discard category} with \kl{conditionals}. The
advantage of this approach is the abundance of models. Examples of \kl{partial
Markov categories} include Kleisli categories of `subprobability monads',
\kl{cartesian restriction categories}
\cite{cockett2002restriction,cockett2003restriction,cockett2007restriction}, and
cartesian bicategories of relations~\cite{CARBONI198711}.

The main contribution of this paper is to show that every \kl{partial Markov
category} is enriched (in the sense of Kelly~\cite{kelly1982basic}) in the
category of preorders and monotone maps. That is, every hom-set in a \kl{partial
Markov category} is a preorder, and the operations of the symmetric monoidal
category structure are monotone with respect to this preorder. The existence of
such an order-enrichment makes the setting particularly appealing for developing
the basics of inequational reasoning in synthetic probability theory. Crucial
properties of bayesian reasoning are inequational: e.g. that updating 
a prior with
evidence increases the likelihood of the evidence being true in the posterior.

\paragraph{Synopsis}
\noindent
Section~\ref{sec:preliminaries} recalls the basic theory of \kl{partial Markov
categories} and introduces some running examples.
Section~\ref{sec:conditional_inequality} defines the preorder enrichment on
\kl{partial Markov categories}. We then display several well-known 
order 
enrichments as
instances of our construction. Section \ref{sec:least_conditionals} studies
\kl{least conditionals} in \kl{partial Markov categories}. Our main results in
this section are about the relationship of \kl{least conditionals} and the
existence of comparators (Theorems~\ref{prop:comp_cap_least_disint}
and~\ref{prop:cap_from_conditional}). Finally,
Section~\ref{sec:validity_increase} is an application of the conditional
preorder. We derive a synthetic analogue of the validity-increase theorem under
suitable assumptions (Theorem~\ref{thm:validity_increase}). 

Throughout the paper we will use string-diagrammatic notation with the diagrams
written from left to right. For a detailed introduction to string-diagrammatic
calculi, see Selinger's survey~\cite{selinger2010survey}. Occasionally, we will
also apply equational reasoning. Thanks to the strictification theorem for
monoidal categories \cite{mac1998categories}, we may treat all monoidal
categories as if they were strict, without loss of generality.

\paragraph*{Related work}
\noindent
Inequalities play an important role in the theory of cartesian restriction categories~\cite{cockett2002restriction} and cartesian bicategories of relations~\cite{CARBONI198711}.
In fact, restriction categories have a canonical ordering on 
morphisms~\cite[\S 2.1.4]{cockett2002restriction} and can be seen as 
enriched categories~\cite{cockettGarner14}. Similarly, morphisms in 
bicategories of relations have a canonical ordering that forms an 
enrichment \cite{Nester2024}.
These canonical inequalities are instances of the ordering we 
introduce 
in this paper, see Propositions \ref{prop:cart_restr_cat_order_equiv} 
and \ref{prop:bicat_of_rel_order_equiv}. 

Order enrichments for categories of probabilistic maps have been considered
before, in the context of effectus theory~\cite{Cho2015,Cho2019a,Jacobs18}.
Recently, quasi-Markov categories were introduced as a framework to handle
partial operations in probability theory
\cite{fritz2025empiricalmeasuresstronglaws,mohammed2025}. Under reasonable
assumptions, such quasi-Markov categories are poset-enriched by the
\kl{restriction} order. Studying preorder-enrichment in quasi-Markov categories
is left for future work.

\section{Preliminaries}\label{sec:preliminaries}

\kl{Copy-discard categories} are theories of
processes -- symmetric monoidal categories -- where resources can be copied and
discarded. \kl{Partial Markov categories}~\cite{2023partialmarkov} are \kl{copy-discard categories} that
additionally have \kl{conditionals}: a factorization property often needed in
categorical probability.%

We go a step further and explicitly introduce \emph{copy-discard-compare (CDC)
categories} (Definition \ref{def:copyDiscardCompareCategory}): theories of
processes where we can assert the equality of two resources (``compare'' them).
\kl{Discrete partial Markov categories} are \kl{copy-discard-compare categories} with
\kl{conditionals} (Definition 
\ref{def:discretePartialMarkovCategory})~\cite{2023partialmarkov}.
Table~\ref{tab:cd_category_zoo} summarises the terminology related to variations of "copy-discard categories".

\begin{table}[h]
  \centering
  \begin{tabular}{|l||c|c|c|}
	\hline
	\textbf{Structure}                         
	& {\small\textbf{Maps are}}
	& {\small\textbf{Comparators/caps}} 
	& {\small\textbf{Conditionals}}
	\\
	\hline\hline
	\kl{Copy-discard category}
	&                    
	&            
	&              
	\\
	\hline
	\kl{Copy-discard-compare category}
	&                    
	& \multicolumn{1}{c|}{\checkmark}           
	&              
	\\
	\hline
	\kl{Cartesian restriction category}
	& "deterministic"                  
	&           
	&              
	\\
	\hline
	\kl{Discrete cartesian restriction category}
	& "deterministic"
	& \multicolumn{1}{c|}{\checkmark}           
	&              
	\\
	\hline
	Quasi-Markov category %
	& "quasi-total" 
	&       
	&              
	\\
	\hline
	\kl{Partial Markov category}                            
	&                    
	&        
	& \multicolumn{1}{c|}{\checkmark}            
	\\
	\hline
	\kl{Discrete partial Markov category}
	&                 
	& \multicolumn{1}{c|}{\checkmark}                   
	& \multicolumn{1}{c|}{\checkmark}                    
	\\
	\hline
	\kl{Markov category}                                    
	& "total"               
	&            
	&              
	\\
	\hline
	\kl{Markov category} with \kl{conditionals}                                    
	& "total"              
	&            
	& \multicolumn{1}{c|}{\checkmark}  
	\\
	\hline
	\end{tabular}
	\caption{Types of copy-discard categories. In each row, the 
	structure entails all properties 
	listed or marked with (\checkmark). }
	\label{tab:cd_category_zoo}
\end{table}

\begin{definition}{\textbf{(Copy-discard category)}}
  \label{def:copyDiscardCategory}%
  A \emph{""copy-discard category""}~\cite{corradini1999algebraic,cho2019disintegration} is a symmetric monoidal category $(\cat{C},\otimes, I)$ in which every object $X$ has a compatible commutative comonoid structure, consisting of a counit or \emph{discard}, $ε_X ፡ X → I$; and a comultiplication or \emph{copy}, $δ_X ፡ X → X ⊗ X$, satisfying the axioms below (also in Figure~\ref{fig:comm_comon_axioms}).
  \begin{enumerate}
    \item Comultiplication is associative, $δ_X ⨾ (δ_X ⊗ \id{}) = δ_X ⨾ (\id{} ⊗ δ_X).$
    \item Counit is neutral for comultiplication, $δ_X ⨾ (ε_X ⊗ \id{}) = \id{} = δ_X ⨾ (\id{} ⊗ ε_X).$
    \item Comultiplication is commutative, $δ_X ⨾ σ_{X,X} = δ_X$,
    where $σ_{X,X} : X \otimes X \to X \otimes X$ is the symmetry.
    \item Comultiplication is uniform, $δ_{X ⊗ Y} = (δ_{X} ⊗ δ_{Y}) ⨾ (\id{} ⊗ σ ⊗ \id{})$, and $δ_I = \id{}$.
    \item Counit is uniform, $ε_{X ⊗ Y} = ε_X ⊗ ε_Y$ and $ε_I = \id{}$.
  \end{enumerate}
  Notably, counits and comultiplications are not required to form natural
  transformations.
\end{definition}

\begin{figure}[h!]
  \[\cocommutativecomonoidAxiomsFig{}\]
  \caption{Structure and axioms of a commutative comonoid.}
  \label{fig:cocommutative-comonoid}
  \label{fig:comm_comon_axioms}
\end{figure}

The language of \kl{copy-discard categories} can express when 
morphisms are `deterministic', `total' and `total on their domain of 
definition' (\kl{quasi-total}).

\begin{definition}%
  \textbf{(Deterministic, total)} %
  \AP A morphism \(f ፡ X → Y\) in a \kl{copy-discard category} is
  \intro{deterministic} if it commutes with the copy, \(f \dcomp \cp_{Y} =
  \cp_{X} \dcomp (f \tensor f)\). It is \intro{total} if it commutes with the
  discard, \(f \dcomp \discard_{Y} = \discard_{X}\). 
\end{definition}

\begin{definition}%
  {\textbf{(Restriction \cite{2023partialmarkov,Lorenz2023})}} %
  \AP Let $\cat{C}$ be a \kl{copy-discard category}.
  \begin{enumerate}
	\item We say that a map $f : X → Y$ is a \intro[restriction order]{restriction} of
	$g : X →
	Y$ if $f = \copier_X \comp (g \otimes (f \comp \discard_{Y}))$. In this
	situation, we write $f \normalisedby g$. See Figure \ref{fig:normalisation}
	for an illustration.
	\item A map $f$ is called \intro{quasi-total} if $f \normalisedby 
	f$.
  \end{enumerate}
\end{definition}

\noindent
Clearly, all "deterministic" and "total" maps are \kl{quasi-total}.
The collection of \kl{quasi-total} maps is partially ordered by restriction~\cite[Lemma 98]{Lorenz2023}.

\begin{figure}[ht!]
		\[ f \normalisedby g \quad\text{if and only if}\quad
		\morphismFig{f} =
		\parinequalityconditionFig  \]
		\caption{The restriction order on quasi-total maps}
		\label{fig:normalisation}
\end{figure}

\begin{definition}
  \defining{linkTriangleComposition}%
  \defining{linkConditionalComposition}%
  The \intro{conditional composition} of two morphisms, $f ፡ X → A$ and
 $g ፡ A ⊗ X → B$, in a \copyDiscardCategory{} is the morphism defined by the equation in Figure~\ref{fig:conditionalComposition}.
 In other words, \kl{conditional composition} passes a copy of the input and the
 output of the first morphism to the second morphism.%
 \begin{figure}[ht!]
   \[(f ⊲ g) = δ_X ⨾ ((f ⨾ δ_A) ⊗ \id_X) ⨾ (\id_A ⊗ g) \qquad\qquad f \condcomp g = \conditionalcompositionDefFig\]
   \caption{Conditional composition.}%
   \label{fig:conditionalComposition}%
 \end{figure}
\end{definition}

\begin{definition}\textbf{(Conditionals)}
  \label{def:conditionals}%
  A \kl{copy-discard category} has \emph{\intro{conditionals}} if 
  every morphism \(f \colon X \to A \tensor B\) can be decomposed as 
  \(f = (f \dcomp (\id_{A} \tensor \discard_{B})) \condcomp c\), for 
  some \kl{quasi-total} \(c \colon A \tensor X \to B\). In this 
  situation we say that $c$ is a \kl{conditional} of $f$.
  For morphisms \(f \colon X \to A\) and \(g \colon A \to B\), a 
  \intro{bayesian inverse} of \(g\) with respect to \(f\) is a
  \kl{conditional} \(\bayesinv{g}{f} \colon B \tensor X \to A\) of \(f 
  \dcomp \cp_{A} \dcomp (g \tensor \id)\).
\end{definition}

Requiring "conditionals" to be "quasi-total" is natural but not strictly necessary for some of our developments.
The proof of Theorem \ref{prop:cap_from_conditional} relies on "conditionals" being "quasi-total".

\begin{definition}\textbf{(Partial Markov category \cite{2023partialmarkov})}\label{def:partialMarkovCategory}
  A \intro{partial Markov category} is a \kl{copy-discard category} with
  \kl{conditionals}.
\end{definition}

\begin{remark}\label{rem:self-normalising}
  In a partial Markov category, a morphism \(f\) is \kl{quasi-total}, \(f \normalisedby f\), if and only if its domain predicate, \(f \dcomp \discard\), is deterministic~\cite[Proposition 3.5]{2023partialmarkov}.
\end{remark}

We will later work with a particular subclass of \kl{copy-discard categories}: those that are \emph{\kl{balanced}}.
\kl{Balanced} \kl{copy-discard categories} cover a large class of examples, and they are particularly well-behaved with respect to the partial order we introduce in Definition~\ref{def:preorder-partial-markov}.
The name \emph{\kl{balanced}} is related to properties of idempotent morphisms, see~\cite[Theorem 4.1.8.]{Fritz2023}.

\begin{definition}%
  \label{def:balanced}%
  {\textbf{(Balanced copy-discard category~\cite{Fritz2023})}} %
  \AP A \kl{copy-discard category} $\cat{C}$ is \emph{\intro{balanced}} if the
  following implication holds for all appropriately typed morphisms.
  \[\includegraphics{fig-balanceddef.pdf}\]
\end{definition}

\noindent
Note that there are at least two other unrelated notions of 
\emph{balanced category} in the 
literature~\cite{selinger2010survey,Johnstone14}.
This should cause no confusion in the current context.

\begin{definition}{\textbf{(Copy-Discard-Compare category)}}
  \defining{linkStrictCopyDiscardCompareCategory}%
  \defining{linkCopyDiscardCompareCategory}%
  \label{def:copyDiscardCompareCategory}%
  A \intro{copy-discard-compare category} is a \kl{copy-discard category} in
  which every object, $X$, has a compatible partial Frobenius structure,
  consisting of an additional commutative multiplication, or 
  \emph{comparator}, $μ_X ፡ X ⊗ X → X$,
  satisfying the following axioms.
  \begin{enumerate}
    \item Multiplication is associative, $(μ_X ⊗ \id{}) ⨾ μ_X  = 
     (\id{} ⊗ μ_X) ⨾ μ_X $.
    \item Multiplication is commutative, $σ_{X,X} ⨾ μ_{X} = μ_{X}$.
    \item Multiplication is right inverse to comultiplication, $δ_X ⨾ μ_X = \id{}$.
    \item Multiplication satisfies the Frobenius rule, $(δ_X ⊗ \id{}) 
    ⨾ (\id{} ⊗ μ_X) = μ_X ⨾ δ_X = (\id{} ⊗ δ_X) ⨾ (μ_X ⊗ \id{}).$
    \item Multiplication is uniform, $μ_{X ⊗ Y} = (\id{} ⊗ σ_{Y,X} ⊗ \id{}) ⨾ (μ_{X} ⊗ μ_{Y})$, and $μ_I = \id{}$.
  \end{enumerate}
\end{definition}
\begin{figure}[h!]
  \[\comparatoraxiomsFig{}\]
  \caption{Structure and axioms of comparators.\label{fig:comparator-axioms}}
\end{figure}

\begin{definition}%
  {\textbf{(Discrete Partial Markov category~\cite{2023partialmarkov})}} %
  \label{def:discretePartialMarkovCategory}
  \AP A \intro{discrete partial Markov category} is a \kl{copy-discard-compare
  category} with \kl{conditionals}.
\end{definition}

We provide an alternative characterization of \kl{copy-discard-compare
categories} via \kl{cap} morphisms. This is the approach followed by 
Lorenz and Tull~\cite{Lorenz2023}, who mention that the
presentations via \kl{caps} and comparators are equivalent.
We provide a proof for completeness.

\begin{definition}{\textbf{(Caps~\cite[Definition 6]{Lorenz2023})}}
  \label{def:cap}
  A \kl{copy-discard category} is said to have \intro{caps} if it is 
  equipped with a family of
  morphisms $\cap_X : X ⊗ X → I$, satisfying the following axioms.
  \begin{enumerate}
	\item \label{def:cap:commutative}Caps are commutative, 
		$\sigma_{X,X} ⨾ \cap_X = \cap_X$.
		\item \label{def:cap:special}Caps interact with the comultiplication,
		$\copier_X \comp \cap_X = \discard_X$.
		\item \label{def:cap:frobenius}Caps satisfy the Frobenius rule, $(δ_X ⊗
		\id{}) ⨾ (\id{} ⊗ \cap_X) = (\id{} ⊗ δ_X) ⨾ (\cap_X ⊗ \id{})$.
		\item \label{def:cap:uniform}Caps are uniform, $\cap_{X ⊗ Y} = (\id{} ⊗
		σ_{Y,X} ⊗ \id{}) ⨾ (\cap_{X} ⊗ \cap_{Y})$, and $\cap_I = \id{}$.
	\end{enumerate}
\end{definition}

\begin{figure}[h!]
  \centering
  \includegraphics{fig-capaxioms.pdf}
  \caption{Structure and axioms of caps.\label{fig:cap-axioms}}
\end{figure}

\begin{proposition}%
  \label{prop:cap_iff_cdc}%
  A \kl{copy-discard category} is a \kl{copy-discard-compare category} if and
  only if it has \kl{caps}.
\end{proposition}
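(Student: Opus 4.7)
The plan is to prove both implications by giving explicit, mutually inverse constructions relating caps and comparators.

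For the forward direction, given a comparator $\mu_X$, define $\cap_X := \mu_X \dcomp \discard_X$. The four cap axioms then reduce to routine consequences of the comparator and comonoid axioms: commutativity of $\cap_X$ inherits from commutativity of $\mu_X$; the special cap axiom $\delta_X \dcomp \cap_X = \discard_X$ follows from $\delta_X \dcomp \mu_X = \id_X$; the Frobenius law for $\cap_X$ is obtained by post-composing the Frobenius law for $\mu_X$ with $\discard$ and invoking counitality; and uniformity of $\cap_X$ combines uniformity of $\mu_X$ and of $\discard$.

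For the backward direction, define
$$\mu_X := (\id_X \otimes \delta_X) \dcomp (\cap_X \otimes \id_X).$$
Cap Frobenius immediately yields the symmetric presentation $\mu_X = (\delta_X \otimes \id_X) \dcomp (\id_X \otimes \cap_X)$, and I will use both forms. I would then verify the comparator axioms in turn. Commutativity of $\mu_X$ follows from cocommutativity of $\delta_X$ and commutativity of $\cap_X$; the special axiom $\delta_X \dcomp \mu_X = \id_X$ follows from coassociativity, the special cap axiom, and a counit cancellation; the Frobenius law for $\mu_X$ is obtained by expanding both sides in terms of $\delta$ and $\cap$ and applying cap Frobenius on each output wire; and uniformity of $\mu_X$ is direct from uniformity of $\delta_X$ and $\cap_X$ modulo swap bookkeeping.

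The main obstacle will be associativity of $\mu_X$, since there is no direct analogue among the cap axioms. My plan is to expand $(\mu_X \otimes \id_X) \dcomp \mu_X$ using one presentation of $\mu$ for the outer factor and the other for the inner, and to expand $(\id_X \otimes \mu_X) \dcomp \mu_X$ symmetrically. Each side becomes a composite on three wires involving two caps interacting with copies of the middle wire; repeated applications of cap Frobenius (sliding a cap across a copy) combined with coassociativity of $\delta_X$ then transport both composites to the same normal form. This is the familiar calculation, anticipated in~\cite{Lorenz2023}, that a commutative special ``pair-of-pants'' multiplication on a comonoid is equivalently determined by its pairing. Finally, a one-line check shows the two constructions are mutually inverse, confirming the stated equivalence.
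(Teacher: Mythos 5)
Your proposal is correct and follows essentially the same route as the paper: define $\cap_X = \mu_X \dcomp \discard_X$ in one direction and $\mu_X$ as either side of the cap Frobenius law in the other, then verify the axioms, with associativity of $\mu_X$ handled exactly as the paper does — by switching between the two Frobenius presentations of $\mu_X$ and invoking coassociativity of copying. The closing remark that the constructions are mutually inverse is not needed for the stated equivalence but is a harmless (and true) bonus, consistent with the paper's later uniqueness corollary.
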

\begin{proof}
  Assume that $\cat{C}$ is a \kl{copy-discard-compare category}. We may define
  \kl{caps} via $\cap_X = \comparator_X ⨾ \discard_X$; the \kl{cap} axioms
  follow immediately.

  Conversely, assume that $\cat{C}$ has \kl{caps}. Define the comparator as one
  of the two sides of the Frobenius rule (\ref{def:cap:frobenius}),
  $\comparator_X = (δ_X ⊗ \id{}) ⨾ (\id{} ⊗ \cap_X) = (\id{} ⊗ δ_X) ⨾ (\cap_X ⊗
  \id{})$.
	
  First, we show commutativity using (\emph{a}) the definition of the compare
  map, (\emph{b}) naturality of symmetries, (\emph{c}) commutativity of the
  \kl{cap}, and (\emph{d}) commutativity of copy.
  \[\includegraphics{fig-comparefromcapscommutativeProofFig.pdf}\]
  For associativity, we use (\emph{a}) the definition of the compare map, and
  (\emph{b}) associativity of copy.
  \[\comparefromcapsassociativeProofFig\]
  Compare is right inverse to copy by (\emph{a}) definition of the compare map,
  (\emph{b}) associativity of copy, (\emph{c}) axioms of caps, and (\emph{d})
  counitality of copy.
  \[\comparefromcapsspecialProofFig\]
  Finally, we show one of the Frobenius equations using (\emph{a}) 
  the definition of the compare map, and (\emph{b}) associativity of 
  copy. The other Frobenius equation can be proven in the same manner.
  \[\comparefromcapsfrobunniusProofFig\]
\end{proof}

\noindent
The rest of this section introduces some \kl{partial Markov categories} that we will use as
running examples.

\subsection{Example: subprobability kernels}

The category $\BorelSubStoch$ of standard Borel spaces and measurable
subprobability kernels is a \kl{discrete partial Markov
category}~\cite{2023partialmarkov}. The subcategory $\FinSubStoch
\hookrightarrow \BorelSubStoch$ of finite sets and substochastic matrices is
also a \discretePartialMarkovCategory{}. One can see that both $\BorelSubStoch$
and $\FinSubStoch$ are "balanced", by essentially the same argument as in
\cite[Proposition A.5.2.]{Fritz2023}.

\subsection{Example: cartesian restriction categories}
\intro{Cartesian restriction
categories}~\cite{cockett2002restriction,cockett2003restriction,cockett2007restriction}
are an abstraction of the category \(\Par\) of sets and partial functions. They
are \kl{copy-discard categories} where comultiplication is natural, i.e.\ all
morphisms are \kl{deterministic} but not necessarily total. 
\intro{Discrete cartesian restriction categories} are cartesian 
restriction
categories with a comparator.
We now show that \kl{cartesian restriction categories} have \kl{conditionals}.
That is, \kl{cartesian restriction categories} are \kl{partial Markov
categories} where all morphisms are \kl{deterministic}.

\begin{proposition}
  All \kl{cartesian restriction categories} are \kl{balanced} \kl{partial Markov
  categories}.
\end{proposition}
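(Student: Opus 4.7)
The plan is to verify, for an arbitrary \kl{cartesian restriction category} \(\cat{C}\), the two assertions of the proposition: that \(\cat{C}\) has \kl{conditionals} (hence is a \kl{partial Markov category}) and that \(\cat{C}\) is \kl{balanced}. Both arguments will exploit the defining property of cartesian restriction categories, namely that every morphism is \kl{deterministic} and so commutes with copying.

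For the existence of \kl{conditionals}, I would construct a conditional explicitly. Given \(f \colon X \to A \otimes B\), write \(f_A = f \comp (\id_A \otimes \discard_B)\) and \(f_B = f \comp (\discard_A \otimes \id_B)\) for the two marginals, and propose as conditional the morphism \(c \colon A \otimes X \to B\) defined by \(c = (\discard_A \otimes \id_X) \comp f_B\); that is, discard the incoming \(A\) and apply \(f_B\) to \(X\). Being a composite of deterministic maps, \(c\) is itself deterministic, hence in particular \kl{quasi-total}. To verify \(f = f_A \condcomp c\), I would expand the definition of \kl{conditional composition} and use the counit law \(\copier_A \comp (\id_A \otimes \discard_A) = \id_A\) to collapse the right-hand side to \(\copier_X \comp (f_A \otimes f_B)\). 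The remaining identity \(f = \copier_X \comp (f_A \otimes f_B)\) then follows from determinism of \(f\): starting from the equation \(f \comp \copier_{A \otimes B} = \copier_X \comp (f \otimes f)\) and post-composing both sides with a suitable pattern of discards on the four-fold product (justified by uniformity and counitality of the comonoid) recovers \(f\) on the left and \(\copier_X \comp (f_A \otimes f_B)\) on the right.

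For the balanced condition, I would unfold the implication of Definition~\ref{def:balanced} and exploit the fact that in a cartesian restriction category, the naturality of copying with respect to every morphism allows copies in the antecedent to be propagated freely through any morphism in the diagram. Once the copies have been pulled through in this way, the diagrammatic hypothesis reduces to an equation between determinate expressions, from which the conclusion follows by routine applications of the commutative comonoid axioms. The main obstacle will be this second part: while the conceptual idea (determinism renders the Cauchy--Schwarz-style hypothesis essentially tautological) is transparent, faithfully matching the string-diagrammatic bookkeeping to the precise statement of Definition~\ref{def:balanced} requires careful tracking of where copies and discards are inserted on each side of the implication.
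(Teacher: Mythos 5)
Your construction of the conditional --- discard the incoming $A$ and apply the $B$-marginal $f_B$, then use determinism of $f$ to rewrite $f$ as $\copier_X \comp (f_A \otimes f_B)$ and collapse the conditional composition via counitality --- is exactly the argument the paper gives, and your observation that the conditional is quasi-total because it is deterministic matches the paper as well. For the balanced condition you leave the diagram bookkeeping unfinished, but the paper is no more detailed there (it simply asserts that determinism yields balance), so the proposal is correct and follows essentially the same route.
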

\begin{proof}
  Morphisms in \kl{cartesian restriction categories} are always
  \kl{deterministic}. We use this fact to show that there are \kl{conditionals}.
  \[\morphismOneTwoFig{f} = \parconditionalsProofFigOne =
  \conditionalsdeterministicFig\] 
  Again, because every morphism is \kl{deterministic}, these 
  \kl{conditionals} are \kl{quasi-total}, and the category is 
  \kl{balanced}.
\end{proof}

\subsection{Example: cartesian bicategories of relations}

\intro{Cartesian bicategories of relations} are an abstraction of the
category
\(\Rel\) of sets and relations. They are \kl{copy-discard-compare categories}
with extra structure and axioms that give them the expressive power of regular
logic (see Figure~\ref{fig:cartesian-bicategories}). They are also an example
class of \kl{partial Markov categories}.

\begin{figure}[h!]
  \cartesianbicategoriesAxiomFig{}
  \caption{Structure and axioms of cartesian bicategories of relations.\label{fig:cartesian-bicategories}}
\end{figure}

\begin{remark}\label{rem:cb-axioms}
  In every cartesian bicategory of relations, the copy and discard morphisms are not natural, but they are \emph{lax natural}: \(f \dcomp \cp \leq \cp \dcomp (f \tensor f)\) and \(f \dcomp \discard \leq \discard\) (\Cref{fig:original-cartesian-bicategories}, left).
  Moreover, the comonoid structure is left adjoint to the monoid structure: \(\id \leq \cp \dcomp \compare\), \(\compare \dcomp \cp \leq \id\), \(\id \leq \discard \dcomp \codiscard\) and \(\codiscard \dcomp \discard \leq \id\) (\Cref{fig:original-cartesian-bicategories}, middle and right).
  In fact, the original definition of cartesian bicategory of relations assumed a poset enrichment satisfying these conditions~\cite[Definitions 1.2 and 2.1]{CARBONI198711}.
  The definition we give here (\Cref{fig:cartesian-bicategories}) is equivalent to the original one~\cite[Lemma 4.1.6]{Nester2024} with the preorder enrichment defined in Equation~\ref{eq:inequality-rel}.
  \begin{figure}[h!]
    \centering
    \originalcartesianbicategoriesAxiomFig{}
    \caption{Comonoids in cartesian bicategories of relations are lax natural, and they are left adjoint to the monoids. \label{fig:original-cartesian-bicategories}}
  \end{figure}
\end{remark}

Idempotency of convolution (last equation in \Cref{fig:cartesian-bicategories}) implies that every morphism in a \kl{cartesian bicategory of relations} is \kl{quasi-total}.

\begin{lemma}\label{lemma:cb-self-normalising}
  In a cartesian bicategory of relations, every morphism is "quasi-total".
\end{lemma}
\begin{proof}
  We use the string diagrammatic syntax of \kl{cartesian bicategories of relations}.
  \[\relquasitotal\]
\end{proof}

Conditionals in \kl{cartesian bicategories of relations} are a consequence of the Frobenius structure and idempotency of convolution.

\begin{proposition}\label{lemma:rel-partial-markov}
  All \kl{cartesian bicategories of relations} are \kl{partial Markov categories}.
\end{proposition}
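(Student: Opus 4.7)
The plan is to exhibit an explicit conditional for every morphism, exploiting the caps that are available in any \kl{copy-discard-compare category} by Proposition~\ref{prop:cap_iff_cdc}. Given $f \colon X \to A \otimes B$, define the candidate conditional
\[
  c := (\id_A \otimes f) \dcomp (\cap_A \otimes \id_B) \colon A \otimes X \to B.
\]
Diagrammatically, $c$ takes inputs $(a, x)$, runs $f$ on $x$ to produce a pair with components in $A \otimes B$, pairs the $A$-component of $f$'s output against the incoming $a$ via the cap, and returns the $B$-component. Informally this is exactly the fibre $\{b \mid (a, b) \in f(x)\}$ in the archetypal model $\Rel$.

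First I would verify the defining conditional equation $f = (f \dcomp (\id_A \otimes \discard_B)) \condcomp c$. Substituting the candidate $c$ into the right-hand side and unfolding the definition of \kl{conditional composition} produces a diagram in which a copy of the $A$-output of $f$ is threaded, on one strand, through a cap paired with the duplicated $A$-input. An application of the Frobenius rule for caps (Definition~\ref{def:cap}.\ref{def:cap:frobenius}) reshapes the copy-then-cap pattern into an identity on $A$ tensored with a cap acting purely between two copies of the marginal, and the axiom $\copier_A \dcomp \cap_A = \discard_A$ (Definition~\ref{def:cap}.\ref{def:cap:special}) then absorbs the redundant branch. What remains is the original $f$.

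The main obstacle is verifying that $c$ is \kl{quasi-total}, i.e.\ $c \normalisedby c$. In $\Rel$ this is immediate because the fibre defining $c(a, x)$ is non-empty exactly when some $b$ witnesses membership, so restricting $c$ by its own domain predicate has no effect. Abstractly, we must show $c = \delta_{A \otimes X} \dcomp (c \otimes (c \dcomp \discard_B))$ purely from the axioms of a \kl{cartesian bicategory of relations}. The leverage will come from two facts: first, every morphism into the monoidal unit (every ``predicate'') is automatically quasi-total, since its \kl{restriction} coincides with itself by counitality of copy; second, a cartesian bicategory of relations is locally posetal and the copy-compare adjunction $\copier \dashv \comparator$ forces $c$ to factor through its own domain predicate. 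The technical work is to recognise that the diagram for $\delta_{A \otimes X} \dcomp (c \otimes (c \dcomp \discard_B))$ can be rewritten, using uniformity of copy, naturality of the symmetries, and one further use of the Frobenius rule, into a shape where the ``witness'' branch $c \dcomp \discard_B$ and the principal branch $c$ share the cap-test on $A$; at that point the redundant witness collapses via $\copier \dcomp \cap = \discard$ and only $c$ remains. This is the step that really uses the bicategorical richness of the cartesian bicategory of relations axioms beyond the bare copy-discard-compare structure.
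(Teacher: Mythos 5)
Your candidate conditional $c = (\id_A \otimes f) \dcomp (\cap_A \otimes \id_B)$ is exactly the one the paper uses, but there is a genuine gap in your verification of the conditional equation $f = (f \dcomp (\id_A \otimes \discard_B)) \condcomp c$. You claim this follows from the Frobenius rule and the specialness axiom $\delta_A \dcomp \cap_A = \discard_A$ alone, i.e.\ from the \kl{copy-discard-compare} structure. It does not. After unfolding the conditional composition, the diagram contains \emph{two} occurrences of $f$, each fed one leg of a copied input $x$, with the cap testing the $A$-output of one run against the $A$-output of the other. The specialness axiom only collapses a cap whose two legs come from a single $\delta$ on one wire; here the two legs come from two independent runs of a morphism that is not assumed deterministic, and no CDC-axiom identifies them. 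Indeed the claim is false in general CDC-categories: in $\FinSubStoch$ your right-hand side evaluates to $\bigl(\sum_{b'} f(a,b'\mid x)\bigr)\cdot f(a,b\mid x)$, which is not $f(a,b\mid x)$. What actually closes the gap in a \kl{cartesian bicategory of relations} is the extra structure beyond caps: the paper sandwiches the two-copies-of-$f$ diagram between $f$ from above (adjointness of discard/codiscard plus lax naturality of discarding) and from below (lax naturality of copying plus the copy/cocopy adjunction), which is essentially the idempotency of convolution. So the "main obstacle" is not quasi-totality, as you suggest, but the conditional equation itself.

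On quasi-totality, your plan gestures at the right ingredients but overcomplicates matters: you do not need a bespoke argument for $c$. In a cartesian bicategory of relations \emph{every} morphism $m$ is \kl{quasi-total}, because $m \normalisedby m$ iff $m \dcomp \discard$ is deterministic (Remark~\ref{rem:self-normalising}), and determinism of $m \dcomp \discard$ is exactly idempotency of convolution applied to that predicate. Reorganizing your proof around this observation, and then supplying the order-theoretic sandwich argument for the conditional equation, would bring it in line with the paper's proof.
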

\begin{proof}
  By definition, \kl{cartesian bicategories of relations} are \kl{copy-discard categories}.
  We show that they also have conditionals using their string diagrammatic syntax~\cite{CARBONI198711}.
  The candidate conditional is in the dashed box below.
  This is a \kl{quasi-total} morphism because every morphism in a \kl{cartesian bicategory of relations} is \kl{quasi-total} (Lemma~\ref{lemma:cb-self-normalising}).
  We prove that the composite morphism on the left equals \(f\) by \emph{(i)} simplifying it to the composition on the right using the Frobenius equation and by \emph{(ii)} bounding the latter from above and from below with \(f\).
  \[\includegraphics{fig-conditionalbending.pdf}\]
  \kl{Cartesian bicategories of relations} are poset-enriched (see Equation~\ref{eq:inequality-rel}).
	We use this partial order to bound the morphism above from above with \(f\) using adjointness of
	the discard
	and the codiscard (\emph{i}), and lax naturality of the discard 
	morphism (\emph{ii}) (see \Cref{fig:original-cartesian-bicategories} and Remark~\ref{rem:cb-axioms}).
	\begin{equation*}
		\conditionalrelsimplifiedFig
		\overset{(i)}{\leq} \relconditionalProofFigOne
		\overset{(ii)}{\leq} \relconditionalProofFigTwo
		= \morphismOneTwoFig{f}
	\end{equation*}
	We bound the same morphism also from below with \(f\) using lax 
	naturality of
	the copy morphism (\emph{i}), and adjointness of the copy with 
	the cocopy (\emph{ii}) (see \Cref{fig:original-cartesian-bicategories} and Remark~\ref{rem:cb-axioms}).
	\begin{equation*}
		\conditionalrelsimplifiedFig
		\overset{(i)}{\geq} \relconditionalProofFigThree
		\overset{(ii)}{\geq} \relconditionalProofFigFour
		= \morphismOneTwoFig{f}
	\end{equation*}
\end{proof}

\noindent
\kl{Cartesian bicategories of relations} are not "balanced" in
general. One can
easily adapt a counterexample in the literature \cite[Example A.5.4.]{Fritz2023}
to get a counterexample in \(\Rel\).
\section{Conditional inequality}%
\label{sec:conditional_inequality}%

Let us endow \kl{partial Markov categories} with a 
preorder-enrichment. We prove
that the \kl{conditional inequality}
(Definition~\ref{def:preorder-partial-markov}) is an enrichment
(Theorem~\ref{thm:enrichment}) and derive basic inequalities.
	
\begin{definition}%
  \textbf{(Conditional inequality).} %
  \label{def:preorder-partial-markov}%
  In any \kl{copy-discard category}, \AP\intro{conditional inequality} relates
  two parallel morphisms, $f ⊑ g$ for $f ፡ X → Y$ and $g ፡ X → Y$, if and only
  if there exists some morphism $r ፡ Y ⊗ X → I$ such that $f = g ⊲ r$. In this
  case, we say that \(r\) \emph{witnesses} the inequality \(f ⊑ g\), and write
  $f \leqprobw{r} g$.
  \[\includegraphics*[valign=c]{fig-conditional-inequality-def.pdf}\]
\end{definition}

\begin{proposition}
  \kl{Conditional inequality} $(⊑)$ is a preorder.
\end{proposition}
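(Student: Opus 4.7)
The plan is to verify the two axioms of a preorder---reflexivity and transitivity---via explicit witnesses, working string-diagrammatically and appealing only to the commutative comonoid axioms from Definition~\ref{def:copyDiscardCategory}.

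For reflexivity of $\sqsubseteq$ on $f \colon X \to Y$, I would take as witness the ``trivial'' scalar $r := \discard_Y \otimes \discard_X$ and verify $f \leqprobw{r} f$. Expanding the definition of conditional composition $f \condcomp r$, the discards in $r$ are applied to (i) the second copy of the $Y$-output produced by the $\delta_Y$ appearing in $f \condcomp r$, and (ii) the second copy of the $X$-input produced by $\delta_X$. Two applications of the counitality law $\delta \comp (\id \otimes \discard) = \id$ then collapse both pairs of wires, leaving exactly $f$.

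For transitivity, suppose $f \leqprobw{r_1} g$ and $g \leqprobw{r_2} h$ with $r_1, r_2 \colon Y \otimes X \to I$. The natural candidate for a witness of $f \sqsubseteq h$ is the ``pointwise product'' $r_{12} := \delta_{Y \otimes X} \comp (r_1 \otimes r_2)$, which copies the pair $(y,x)$ and runs both tests in parallel. It then suffices to verify $f = h \condcomp r_{12}$, since by hypothesis $f = g \condcomp r_1 = (h \condcomp r_2) \condcomp r_1$. Substituting the inner conditional composition and unfolding the string diagram, one obtains a picture in which the input $X$ is copied three times (feeding $h$, $r_2$, and $r_1$) and the output of $h$ is copied three times (final output, test argument for $r_2$, and test argument for $r_1$). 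Using coassociativity, cocommutativity, and the uniformity $\delta_{Y \otimes X} = (\delta_Y \otimes \delta_X) \comp (\id \otimes \sigma \otimes \id)$, these nested copies regroup into a single outer $\delta_X$ and $\delta_Y$ followed by $r_{12}$, matching exactly the diagram of $h \condcomp r_{12}$.

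The main obstacle is purely bookkeeping in the transitivity step: keeping track of which copies of $X$ and $Y$ are consumed where, and matching them up after applying the coassociativity axioms. Conceptually the argument is ``$r_1 \cdot r_2$ witnesses the composite inequality'', and once it is drawn as a string diagram it reduces to rearranging wires via the comonoid axioms. Notably, neither \kl{conditionals} nor partiality-specific structure play a role; the proposition holds in any \kl{copy-discard category}.
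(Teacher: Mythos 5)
Your proof is correct and follows essentially the same route as the paper: reflexivity is witnessed by the discard (your $\discard_Y \otimes \discard_X$ is $\discard_{Y\otimes X}$ by uniformity), and your ``pointwise product'' witness $\delta_{Y\otimes X}\comp(r_1\otimes r_2)$ for transitivity is, up to cocommutativity, exactly the paper's witness $s \condcomp r$, since conditional composition of two effects unfolds to precisely that diagram. The only cosmetic difference is that the paper appeals to associativity of $\condcomp$ as a black box where you carry out the wire bookkeeping explicitly.
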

\begin{proof}
  Reflexivity is witnessed by the discard morphism, \(f ⊲ \discard = f\), as on
  the left below. Let us prove transitivity. If we assume \(f \leqprobw{r} g\)
  and \(g \leqprobw{s} h\), then by associativity of conditional composition, $f
  = g ⊲ r = (h ⊲ s) ⊲ r = h ⊲ (s ⊲ r)$. Thus, $s ⊲ r$ witnesses $f ⊑ h$, as on
  the right below.
  \[\conditionalinequalityreflexiveProofFig{}
 \qquad \quad \conditionalinequalitytransitiveProofFig{}\] %
\end{proof}

\begin{theorem}%
  \label{thm:enrichment}%
  Any \kl{partial Markov category} is a preorder-enriched symmetric monoidal category.
\end{theorem}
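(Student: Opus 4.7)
The plan is to verify the remaining ingredients of preorder-enrichment: reflexivity and transitivity of \(\sqsubseteq\) are already in hand, so it suffices to show that both the tensor product and sequential composition are monotone with respect to \(\sqsubseteq\).

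For monotonicity of the tensor, suppose \(f \leqprobw{r} g\) with \(f, g \colon X \to Y\) and \(f' \leqprobw{r'} g'\) with \(f', g' \colon X' \to Y'\). I would take the witness for \(f \otimes f' \sqsubseteq g \otimes g'\) to be \(r \otimes r'\) precomposed with the symmetry that rearranges \(Y \otimes Y' \otimes X \otimes X'\) into \((Y \otimes X) \otimes (Y' \otimes X')\). Unfolding the definition of conditional composition and applying the uniformity axiom \(\delta_{X \otimes X'} = (\delta_X \otimes \delta_{X'}) \comp (\id \otimes \sigma \otimes \id)\), together with the analogous identity for \(\delta_{Y \otimes Y'}\), separates the resulting string diagram into two independent copies of the defining diagrams of \(g \condcomp r\) and \(g' \condcomp r'\), which combine in parallel to \(f \otimes f'\). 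This half does not invoke \kl{conditionals} and is a routine string-diagram calculation.

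The substantive step is monotonicity of sequential composition. I would split it into two sub-claims: (a) \(f \sqsubseteq g\) implies \(f \comp h \sqsubseteq g \comp h\); and (b) \(f \sqsubseteq g\) implies \(h \comp f \sqsubseteq h \comp g\). These, combined with transitivity, yield the general case. In both sub-claims the plan is to construct a witness from a conditional of an appropriate joint morphism. For (a), given \(f, g \colon X \to Y\), witness \(r \colon Y \otimes X \to I\) with \(f = g \condcomp r\), and \(h \colon Y \to Z\), consider the joint \(J = g \comp \delta_Y \comp (h \otimes \id_Y) \colon X \to Z \otimes Y\). Its \(Z\)-marginal is \(g \comp h\), so by the existence of \kl{conditionals} there is a quasi-total \(c \colon Z \otimes X \to Y\) satisfying \(J = (g \comp h) \condcomp c\); intuitively \(c\) represents the distribution of \(y\) given \((x, z)\). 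I would then take as witness for (a) the morphism
\[ t = (\id_Z \otimes \delta_X) \comp (c \otimes \id_X) \comp r \colon Z \otimes X \to I, \]
that is, reconstruct a plausible \(y\) from \((z, x)\) using \(c\) and then evaluate \(r\) on \((y, x)\). Sub-claim (b) is entirely analogous, with the roles of \(r\) and \(c\) reshuffled to accommodate the joint \(h \comp \delta_Y \comp (\id_Y \otimes g) \colon X \to Y \otimes Z\).

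The main obstacle is verifying the equation \(f \comp h = (g \comp h) \condcomp t\). The plan is to expand \((g \comp h) \condcomp t\) according to the definition of conditional composition, then apply the defining equation of \(c\) to rewrite the subdiagram ``\(g \comp h\), copy \(Z\), apply \(c\)'' into ``\(g\), copy \(Y\), apply \(h\)''; after rearranging wires using symmetries and the comonoid axioms, the diagram collapses to \((g \condcomp r) \comp h = f \comp h\). Quasi-totality of the conditional is what guarantees that this substitution yields an equation rather than only an inequation, and this is precisely where the assumption of a \kl{partial Markov category} is used rather than a mere \kl{copy-discard category}.
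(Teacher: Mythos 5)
Your proposal is correct and follows essentially the same route as the paper: your conditional $c$ of the joint $g \comp \delta_Y \comp (h \otimes \id_Y)$ is exactly the Bayesian inverse $h^\dagger_g$ that the paper uses to build its witness for monotonicity of composition, and the tensor case is the same routine calculation with tensored witnesses. One small correction: the substitution step is an equation simply because $c$ satisfies the defining equation of a conditional — its quasi-totality plays no role here, so the hypothesis of a partial Markov category enters only through the \emph{existence} of conditionals.
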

\begin{proof}
  We show that the preorder $(⊑)$ on hom-sets is respected by
  composition and monoidal products.
  
  Assuming that $g \leqprobw{r} g'$, we can see that $f ⨾ g ⊑ f ⨾ g'$ 
  by
  $f ⨾ g = f ⨾ (g' ⊲ r) = (f ⨾ g') ⊲ (\bayesinv{g'}{f} ⊲ (r ⊗ ε)).$
  \[\includegraphics{fig-inequalitypreservespostcomposition.pdf}\]
  \indent Assuming that $g \leqprobw{r} g'$, we can see that $g ⨾ h ⊑ 
  g' ⨾ h$, 
  by
  $g ⨾ h = (g' ⊲ r) ⨾ h = (g' ⨾ h) ⊲ (\bayesinv{h}{g'} ⊲ (ε ⊗ r)).$
  \[\includegraphics{fig-inequalitypreservesprecomposition.pdf}\]
  \indent Assuming that $g \leqprobw{r} g'$, we can see that $f ⊗ g ⊑ f ⊗ g'$, by
  $f ⊗ g = f ⊗ (g' ⊲ r) = (f ⊗ g') ⊲ (ε ⊗ r).$
  \[\includegraphics{fig-inequalitypreservesrighttensor.pdf}\]
  We proceed analogously to show that $f \leqprob f'$ implies $f ⊗ g ⊑ f' ⊗ g$.
  We showed that composition and tensors are monotone. Therefore, the
  \kl{conditional preorder} is an enrichment.
\end{proof}

\noindent
We collect a few elementary properties of the \kl{conditional preorder}.
\begin{proposition}
	Let $\cat{C}$ be a \discretePartialMarkovCategory{}.
	\begin{enumerate}
		\item For all $f : X \to Y$, $δ ⨾ (f ⊗ f) ⨾ μ ⊑ f$
		\item Multiplication and comultiplication form an adjoint 
		pair $\copier \dashv \comparator$ in the sense of 
		2-categories \cite{Lack2010}. 
		That is, $\id{} \leqprob \delta \comp 
		\mu$ and  $μ ⨾ δ ⊑ \id{}$.
	\end{enumerate}
\end{proposition}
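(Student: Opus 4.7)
The plan is to exhibit explicit scalar witnesses $r$ for each non-trivial $\sqsubseteq$-inequality, built from the caps $\cap_X = \mu_X \comp \epsilon_X$; the verification is then a matter of applying the Frobenius rule and the comonoid axioms. The inequality $\id_X \sqsubseteq \delta_X \comp \mu_X$ in (2) is immediate by reflexivity, since the CDC axiom forces $\delta_X \comp \mu_X = \id_X$.

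For (1), I would take the witness $r := (\id_Y \otimes f) \comp \cap_Y \colon Y \otimes X \to I$. The verification begins by rewriting $\mu_Y = (\delta_Y \otimes \id_Y) \comp (\id_Y \otimes \cap_Y)$ via the expression of multiplication in terms of caps derived in the proof of Proposition~\ref{prop:cap_iff_cdc}. Bifunctoriality of $\otimes$ then gives $(f \otimes f) \comp (\delta_Y \otimes \id_Y) = (f \comp \delta_Y) \otimes f$, and a single use of the interchange law slides the remaining $f$ onto the $X$-wire feeding the cap, producing precisely $\delta_X \comp ((f \comp \delta_Y) \otimes \id_X) \comp (\id_Y \otimes r) = f \condcomp r$, which is what $\sqsubseteq$ asks for.

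For the counit inequality $\mu_X \comp \delta_X \sqsubseteq \id_{X \otimes X}$ in (2), the witness is $r := \cap_X \otimes \epsilon_{X \otimes X} \colon (X \otimes X) \otimes (X \otimes X) \to I$. Expanding $\id_{X \otimes X} \condcomp r$ by definition, the discard $\epsilon_{X \otimes X}$ absorbs the redundant third copy created by the iterated $\delta_{X \otimes X}$, so by counitality the expression collapses to $\delta_{X \otimes X} \comp (\id_{X \otimes X} \otimes \cap_X)$. The main remaining task, and the hard part, is to identify this with $\mu_X \comp \delta_X$.

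To do so, the plan is to split $\delta_{X \otimes X}$ via the uniformity axiom as $(\delta_X \otimes \delta_X) \comp (\id_X \otimes \sigma_{X, X} \otimes \id_X)$, so that $\cap_X$ receives one $X$-factor from each of the two input copies; the Frobenius rule for caps, $(\id_X \otimes \delta_X) \comp (\cap_X \otimes \id_X) = \mu_X$, together with cocommutativity of $\delta_X$, rewrites the partial sum over the copies of the second input as $\mu_X$ applied to the appropriate factors of the first input and the (undecomposed) second input. A final use of the Frobenius axiom $(\delta_X \otimes \id_X) \comp (\id_X \otimes \mu_X) = \mu_X \comp \delta_X$ then finishes the derivation. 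The subtle point here is that ``cap the middle two factors'' and ``cap the last two factors'' of the output of $(\delta_X \otimes \delta_X)$ are genuinely distinct morphisms $X^{\otimes 4} \to X \otimes X$, so the equality must be established only after pre-composition with the copies, and the step where cocommutativity of $\delta_X$ reconciles the two available Frobenius forms of $\mu_X$ is essential.
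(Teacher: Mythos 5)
Your proposal is correct and follows essentially the same route as the paper: the paper likewise disposes of $\id \sqsubseteq \delta\comp\mu$ by the CDC axiom $\delta\comp\mu=\id$, and establishes the other two inequalities by exhibiting the defining equality $g \condcomp r = f$ via the partial Frobenius axioms, with witnesses equivalent to your $(\id_Y \otimes f)\comp\cap_Y$ and $\cap_X \otimes \discard_{X\otimes X}$. Your verification of both witnesses (including the four-wire bookkeeping for $\mu_X\comp\delta_X \sqsubseteq \id_{X\otimes X}$) checks out.
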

\begin{proof}
  For point (i), observe that the equality below left follows from the partial
  Frobenius axioms.
  \[\includegraphics{fig-convolutionhappyman.pdf}\]
  For point (ii), the first inequality holds by definition,
  while the second one can be derived from the partial Frobenius 
  axioms, as on the right above.
\end{proof}

\subsection{Subunital enrichments}

One may be interested in \kl{partial Markov categories} with 
additional logical structure, e.g., certain effectuses (in partial 
form) with copiers 
\cite{Cho2015}. In such categories, effects $X → I$ can 
be viewed as (fuzzy) predicates: for example, the discard map, $ε_X : 
X → I$, is interpreted as the `true' predicate. When equipping a 
\kl{partial Markov category} with a preorder enrichment, one would 
reasonably expect truth to be the largest predicate. 
We call a preorder enrichment \emph{subunital} if $p \le \discard_X$ for
all effects $p : X \to I$. 

\begin{proposition}\label{prop:subunitality}
	In any \partialMarkovCategory{}, both the \kl{conditional preorder} and the \kl{restriction order} are subunital.
\end{proposition}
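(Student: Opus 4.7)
The plan is to exhibit $p$ itself as the witness. Unpacking Definition~\ref{def:preorder-partial-markov}, the inequality $p \leqprob \discard_X$ amounts to finding some $r \colon I \otimes X \to I$ such that $p = \discard_X \condcomp r$. I would simply take $r = p$, viewed as an arrow $I \otimes X \to I$ via the left unitor, and then verify the required equation by a direct calculation from Figure~\ref{fig:conditionalComposition}.

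Expanding the definition of conditional composition with $f = \discard_X \colon X \to I$ (so $A = I$) and $r = p \colon I \otimes X \to I$, one obtains
\[\discard_X \condcomp p = \copier_X \comp ((\discard_X \comp \copier_I) \otimes \id_X) \comp (\id_I \otimes p).\]
Two comonoid axioms make this collapse: uniformity of comultiplication gives $\copier_I = \id_I$, so the middle factor simplifies to $\discard_X \otimes \id_X$; counitality of $\copier_X$ then gives $\copier_X \comp (\discard_X \otimes \id_X) = \id_X$. What remains is precisely $p$.

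There is no real obstacle here — the argument uses only the comonoid axioms and not the existence of conditionals, so in fact the result holds already at the level of copy-discard categories. One could equally present the proof as a short string-diagrammatic manipulation, erasing the top $\discard_X \comp \copier_I$ branch by counitality; I would choose whichever form best matches the house style of the surrounding propositions.
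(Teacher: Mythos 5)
Your proof is correct and is essentially the paper's own argument: the paper's proof is exactly the observation that $\discard_X \condcomp p = p$ by counitality of copy (together with $\copier_I = \id_I$), so $p$ itself witnesses $p \leqprob \discard_X$. Your remark that the statement already holds in any copy-discard category, without invoking conditionals, is also accurate and consistent with how the paper sets up the conditional inequality.
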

\begin{proof}
	By counitality of copy.
	\[\discardtopelementProofFig\]
\end{proof}

Moreover, as the next proposition shows, the \kl{conditional 
preorder} is the least subunital preorder enrichment on a \kl{partial Markov category}. This property will be useful in
(\S\ref{subsec:condpreorder_examples}) where we compare 
the \kl{conditional preorder} to the canonical orderings in our running 
examples.

\begin{proposition}\label{prop:least_subunital_enrichment}
  Let $(\cat{C},\le)$ be a preorder-enriched \kl{partial Markov category} satisfying $p \le \discard_X$ for all effects $p \colon X \to I$.
  Then, for any two parallel morphisms, $f,g \colon X \to Y$, the following implication holds:
  \[ f \leqprob g \implies f \le g. \]
  An immediate consequence of this is that, if $(\le)$ is a 
  poset-enrichment, then the \kl{conditional preorder} is also a poset, 
  i.e. it is antisymmetric.
\end{proposition}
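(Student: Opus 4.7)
The plan is to unfold the definition of conditional inequality and use monotonicity of the given enrichment $\le$ together with the subunital hypothesis to squeeze the conditional composition down to $g$. Concretely, suppose $f \leqprobw{r} g$, meaning that there exists an effect $r \colon Y \otimes X \to I$ with
\[ f = g \condcomp r = \delta_X \comp ((g \comp \delta_Y) \otimes \id_X) \comp (\id_Y \otimes r). \]
By the subunital hypothesis applied to the effect $r$, we have $r \le \discard_{Y \otimes X}$.

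Next I would use the fact that $(\cat{C},\le)$ is a preorder-enriched monoidal category, so composition and tensor are monotone in each argument. Applying this to the expression for $g \condcomp r$ above, with $r$ replaced by the larger effect $\discard_{Y \otimes X}$, gives
\[ f = g \condcomp r \;\le\; g \condcomp \discard_{Y\otimes X}. \]
Then, by uniformity $\discard_{Y\otimes X} = \discard_Y \otimes \discard_X$ and counitality of the copy on both $X$ and $Y$, the right-hand side simplifies to $g$:
\[ g \condcomp \discard_{Y\otimes X} = \delta_X \comp (g \otimes \id_X) \comp (\discard_Y \otimes \discard_X) = \delta_X \comp (g \otimes \id_X) \comp(\discard_Y \otimes \discard_X) = g. \]
Hence $f \le g$, which is the main implication.

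For the antisymmetry consequence, suppose $\le$ is in fact a partial order. If $f \leqprob g$ and $g \leqprob f$, then applying the implication just proved in both directions yields $f \le g$ and $g \le f$, so $f = g$ by antisymmetry of $\le$. There is no real obstacle here: the only subtlety is making sure to invoke monotonicity in the right places (so that replacing $r$ by $\discard_{Y \otimes X}$ inside the conditional composition is justified), and that the normalisation of $g \condcomp \discard$ to $g$ is carried out cleanly using counitality and uniformity of discarding.
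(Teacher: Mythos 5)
Your proof is correct and follows essentially the same route as the paper: unfold $f = g \condcomp r$, use the subunital hypothesis to replace the witness $r$ by $\discard_{Y\otimes X}$ via monotonicity of composition and tensor, and then collapse $g \condcomp \discard$ to $g$ by counitality. The antisymmetry corollary is handled exactly as intended.
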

\begin{proof}
  Let $f \leqprob g$ be witnessed by $r \colon Y \otimes X \to I$. Then,
  \[\minimalpreorderProofFig{} \ .\]
\end{proof}

\subsection{Restriction and the conditional preorder}

In this subsection, we relate the "conditional 
preorder" to the restriction order.

\begin{lemma}
	\label{lem:cond_preord_implies_normalisedby}
	Let $\cat{C}$ be \kl{copy-discard category}, and let $f,g : X \to 
	Y$ be two parallel morphisms. Then, $f \normalisedby g$ 
	implies $f \leqprob g$. 
\end{lemma}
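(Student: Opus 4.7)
The plan is to produce an explicit witness $r \colon Y \otimes X \to I$ establishing $f \leqprob g$. Unpacking $f \normalisedby g$ we have $f = \delta_X \comp (g \otimes (f \comp \discard_Y))$, which says that $f$ equals $g$ reweighted by a scalar depending only on the input, namely $f \comp \discard_Y \colon X \to I$. This asymmetry --- the reweighting ignores the output of $g$ --- suggests taking as witness
\[ r \;=\; \discard_Y \otimes (f \comp \discard_Y) \colon Y \otimes X \to I, \]
which simply throws away its $Y$-argument and evaluates the same scalar on the $X$-argument.

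To verify this works, I would unfold the definition of conditional composition:
\[ g \condcomp r \;=\; \delta_X \comp ((g \comp \delta_Y) \otimes \id_X) \comp (\id_Y \otimes r). \]
Substituting the chosen $r$ and rearranging, the key simplification is counitality of copy applied to the auxiliary $Y$-wire, $\delta_Y \comp (\id_Y \otimes \discard_Y) = \id_Y$. Once this collapses the extra copy of $Y$, the composite reduces to $\delta_X \comp (g \otimes (f \comp \discard_Y))$, which is exactly $f$ by the restriction hypothesis. Hence $r$ witnesses $f \leqprobw{r} g$.

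The argument is essentially bookkeeping, with a single counitality step bridging the two definitions; I do not anticipate any real obstacle. The quasi-totality hypothesis does not appear to be needed for this direction --- it is part of the standing assumption because $\normalisedby$ is only known to be well-behaved (in particular, antisymmetric) on quasi-total maps. Rendered as string diagrams, the proof would amount to sliding the scalar $(f \comp \discard_Y)$ past the copy via one application of the comonoid counit law, making the coincidence of the two reweightings visually immediate.
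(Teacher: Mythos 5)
Your proof is correct and matches the paper's: the paper's one-line proof takes the witness $f \comp \discard_Y$ (implicitly padded with $\discard_Y$ on the $Y$-input to get the required type $Y \otimes X \to I$), which is exactly your $r$, and the verification is the same single counitality step. Your side remark that quasi-totality is not used in this direction is also accurate.
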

\begin{proof}
	The conditional inequality $f \leqprob g$ is witnessed by $f 
	\comp \discard_Y$.
\end{proof}

\noindent
The converse implication does not hold in general, see Example 
\ref{ex:orders_differ}. However, under 
certain assumptions, we can derive it for \kl{quasi-total} maps.

\begin{definition}{\textbf{(Sharp witness property)}} 
  A \kl{partial Markov category}, $\cat{C}$, satisfies the \intro{sharp witness property} if each inequality, $f \leqprob g$, between parallel
  "quasi-total" morphisms, $f,g : X \to Y$, is witnessed by a 
  deterministic effect $r : Y ⊗ X → I$.
\end{definition}

\begin{proposition}
	\label{prop:order_equivalence}	
	Let $\cat{C}$ be a \kl{balanced} \kl{copy-discard category} that satisfies the
	\kl{sharp witness property}. Let $f,g ፡ X → Y$ be parallel
	"quasi-total" maps. Then, $f \leqprob g$ if and only if $f
	\normalisedby g$.
\end{proposition}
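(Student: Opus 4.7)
The ``if'' direction is exactly Lemma \ref{lem:cond_preord_implies_normalisedby}. For the ``only if'' direction, suppose \(f \leqprob g\); by the sharp witness property, there exists a \kl{deterministic} effect \(r \colon Y \otimes X \to I\) such that \(f = g \condcomp r\). The goal is to verify the equation
\[
  f \;=\; \copier_X \comp \bigl(g \otimes (f \comp \discard_Y)\bigr),
\]
which is the defining equation of \(f \normalisedby g\).

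The first step is to compute \(f \comp \discard_Y\) explicitly. Substituting \(f = g \condcomp r\) into this expression and collapsing the copy node on \(g\)'s output by \(\copier_Y \comp (\discard_Y \otimes \id_Y) = \id_Y\) yields
\[
  f \comp \discard_Y \;=\; \copier_X \comp (g \otimes \id_X) \comp r.
\]
By \kl{quasi-totality} of \(f\) and Remark~\ref{rem:self-normalising}, this effect is \kl{deterministic}. Substituting it back into the right-hand side of the target equation produces a diagram in which \(g\) appears \emph{twice} on independent copies of \(X\) (one on the output wire, another feeding \(r\) inside \(f \comp \discard_Y\)), whereas the left-hand side \(f\) contains only a \emph{single} application of \(g\) whose output is shared between the output wire and the input of \(r\). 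This mismatch between ``one copy of \(g\)'' and ``two independent copies of \(g\)'' is the heart of the argument.

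To close the gap I would use two ingredients. First, determinism of \(r\) supplies the duplication identity \(r = \copier_{Y \otimes X} \comp (r \otimes r)\), which I would apply to put both sides of the equation into a normal form containing two copies of \(r\). Second, the \kl{balanced} axiom is precisely what is needed to identify the resulting diagrams: when two independent copies of \(g\) are filtered by the same deterministic predicate against the same input \(X\), the extra copy of \(g\) is redundant, and the two-copy configuration collapses to the one-copy configuration inside \(f\). Quasi-totality of \(g\) enters in parallel through the determinism of \(g \comp \discard_Y\), which is needed to keep the intermediate diagrams in the shape the balanced axiom consumes.

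The main obstacle is this last diagrammatic step: fitting the intermediate expression into the precise template of the balanced axiom requires a careful placement of copies and discards, and the three hypotheses (determinism of \(r\), the balanced property, and quasi-totality of both \(f\) and \(g\)) are used in combination --- none of them alone suffices.
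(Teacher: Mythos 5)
Your overall route is the same as the paper's: the backward direction is exactly Lemma~\ref{lem:cond_preord_implies_normalisedby}, and the forward direction extracts a deterministic witness $r$ from the sharp witness property, computes $f \comp \discard_Y = \copier_X \comp (g \otimes \id_X) \comp r$, observes that this effect is deterministic, and then appeals to the balanced axiom before discarding an auxiliary output. These preliminary steps are all correct. (One small remark: you do not need Remark~\ref{rem:self-normalising} here, and strictly speaking it is stated for partial Markov categories rather than general copy-discard categories; the direction you need --- quasi-totality of $f$ implies determinism of $f \comp \discard_Y$ --- follows in any copy-discard category by post-composing the quasi-totality equation with $\discard_Y$.)

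The gap is at the decisive step, and you flag it yourself. The balanced axiom is an \emph{implication}: its premise is a specific equality of joint morphisms, and the substance of this direction of the proposition is (a) choosing the morphisms to instantiate that implication and (b) verifying its premise. You never do either; instead you describe the intended effect of the axiom (``two independent copies of $g$ collapse to one'') and declare the diagrammatic fitting an open obstacle. In the paper's argument the instantiation is, roughly: the joint $\copier_X \comp (g \otimes \id_X) \colon X \to Y \otimes X$ plays the role of the shared morphism, the two effects being compared against it are $r$ and $\discard_Y \otimes (f \comp \discard_Y)$, and the premise amounts to showing that all three pairings of these effects with the joint are equal (each reducing to $f \comp \discard_Y$). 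Verifying those equalities is precisely where the hypotheses are consumed: determinism of $r$ collapses $\copier \comp (r \otimes r)$ to $r$; quasi-totality of $f$ makes $f \comp \discard_Y$ deterministic so that its square collapses; and quasi-totality of $g$ --- the full equation $g = \copier_X \comp (g \otimes (g \comp \discard_Y))$, not merely determinism of $g \comp \discard_Y$ as you suggest --- is needed for the remaining pairing. Only then does balance deliver an equality of morphisms $X \to Y \otimes X$, whose $X$-output one discards to obtain $f = \copier_X \comp (g \otimes (f \comp \discard_Y))$, i.e.\ $f \normalisedby g$. Without exhibiting and checking the premise, the proof is not complete; that verification \emph{is} the proof.
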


\begin{proof}
  By Lemma~\ref{lem:cond_preord_implies_normalisedby}, \(f \normalisedby g\)
  implies \(f \leqprob g\). For the other direction, we use balance and \kl{sharp
  witnesses}. Suppose \(f \leqprob g\) with a \kl{sharp witness} \(r\). Since \(f\) is
  "quasi-total", we obtain:
  \[\orderequivalenceProofFigOne\]
  By balance, we obtain the equality below left. Then, by discarding 
  the bottom output, we obtain the equality \((a)\) below
  right, which shows that \(f \normalisedby g\).
  \[\orderequivalenceProofFigTwo\]
\end{proof}

\subsection{Examples}%
\label{subsec:condpreorder_examples}

We finish this section by instantiating the \kl{conditional preorder} in 
our running examples.

\paragraph{Finitary subdistributions}

Let $f, g ፡ X → \subdistr(Y)$ be two parallel substochastic channels in $\FinSubStoch$.
The inequality $f \leqprob g$ holds exactly when $f$ is pointwise dominated by $g$; that is, for all $y ∈ Y$ and $x ∈ X$, we have $f(y \given x) ≤ g(y \given x)$.
The category $\FinSubStoch$ also satisfies the \kl{sharp witness property}.
Consider two \kl{quasi-total} morphisms, $f,g
: X \to \subdistr(Y)$, satisfying $f \leqprob g$.
Then for all $x \in X$ either $f(y\given x) = 0$ for all $y \in Y$, or $f(y\given x) = g(y\given x)$ for all $y \in Y$.
We can then define a sharp predicate $r : Y \otimes X \to \subdistr(1)$ as below to witness the inequality \(f \leqprob g\).
\begin{align*}
	r(\ast \given y,x) = \begin{cases}
		0 &\text{ if } f(y' \given x) = 0 \text{ for all } y' \in Y 
		\\
		1 &\text{ otherwise }.
	\end{cases}
\end{align*}

\paragraph{Standard Borel spaces and subprobability kernels}
\label{ex:standardBorel}
One can define a pointwise order on parallel maps $f,g : (X,\sigma_X) 
\to (Y,\sigma_Y)$ in $\BorelSubStoch$ by setting $f \le g$ 
if and only if $f(T \given x) \le g(T \given x)$ for all points $x 
\in X$ and measurable sets $T \in \sigma_Y$. This pointwise order is 
a subunital enrichment \cite[Proposition 3.2.7. and 
Section 3.3.2.]{Cho2019a}. The next proposition shows that the 
\kl{conditional preorder} coincides with the pointwise order.

\begin{proposition}
	For any two parallel morphisms $f,g : (X,\sigma_X) \to 
	(Y,\sigma_Y)$ in $\BorelSubStoch$, $f \leqprob g$ if and only if 
	$f 
	\leq g$.
\end{proposition}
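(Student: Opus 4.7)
My plan is to prove the two directions separately, with the easy direction coming from the machinery already developed in the paper.

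For the forward direction ($f \leqprob g$ implies $f \le g$), I would simply invoke Proposition~\ref{prop:least_subunital_enrichment}. The pointwise order on $\BorelSubStoch$ is a preorder enrichment (the cited result of Cho), and it is subunital because for any effect $p \colon X \to I$, writing $p$ as a subprobability on the one-point space gives $p(\star \mid x) \le 1 = \discard_X(\star \mid x)$. Hence the conditional preorder is contained in the pointwise order.

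For the reverse direction ($f \le g$ implies $f \leqprob g$), I need to exhibit a witness $r \colon Y \otimes X \to I$ satisfying $f = g \condcomp r$. Unwinding the definition of \conditionalComposition{} in $\BorelSubStoch$, this means
\[ f(T \mid x) = \int_T r(\star \mid y, x)\, g(dy \mid x) \qquad \text{for all } x \in X,\ T \in \sigma_Y. \]
So $r(\star \mid y, x)$ must be a Radon--Nikodym density of $f(\cdot \mid x)$ with respect to $g(\cdot \mid x)$. The pointwise hypothesis $f(T \mid x) \le g(T \mid x)$ guarantees, for each fixed $x$, absolute continuity $f(\cdot \mid x) \ll g(\cdot \mid x)$; moreover, since $f \le g$ as measures, the resulting Radon--Nikodym derivative can be chosen to take values in $[0,1]$, making $r$ a genuine subprobability kernel fibrewise.

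The main obstacle is measurability: the density $\rho(y, x) \coloneqq \tfrac{df(\cdot\mid x)}{dg(\cdot\mid x)}(y)$ must be jointly measurable in $(y, x)$ to define a morphism of $\BorelSubStoch$. This is where the standard Borel assumption earns its keep. My plan is to use the following standard construction: fix a countable generating algebra $\{B_n\}_{n \in \mathbb{N}}$ of the Borel $\sigma$-algebra on $Y$, form the jointly measurable processes $x \mapsto f(B_n \mid x)$ and $x \mapsto g(B_n \mid x)$, and define $\rho$ as the almost-sure limit of the corresponding martingale-style ratios (a parametrised Radon--Nikodym theorem). Alternatively, one can appeal directly to a parametric version of the Radon--Nikodym theorem for standard Borel spaces (existence of a jointly measurable density whenever the dominating kernel and the dominated kernel are both measurable). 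Either route produces a jointly measurable $\rho \in [0,1]$, and setting $r(\star \mid y, x) \coloneqq \rho(y, x)$ gives the required witness. A final verification, changing variables in the integral above, confirms $g \condcomp r = f$.
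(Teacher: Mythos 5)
Your proposal is correct and follows essentially the same route as the paper: the forward direction via Proposition~\ref{prop:least_subunital_enrichment}, and the reverse direction via a parametrised Radon--Nikod\'ym theorem for kernels (the paper cites V\'ak\'ar's version directly) followed by arranging the density to lie in $[0,1]$. The only step you assert rather than argue is that the density \emph{can} be chosen $[0,1]$-valued while staying jointly measurable; the paper does this by showing $\{y : r(y,x) > 1\}$ is $g(\cdot\mid x)$-null for each $x$ and then truncating, which is routine to add.
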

\begin{proof}
	\((\Rightarrow)\) The first implication is immediate from Proposition
	\ref{prop:least_subunital_enrichment}.
	
	\((\Leftarrow)\) Let $f ≤ g$ pointwise. The Radon-Nikodým Theorem for
	kernels (see \cite[Theorem 10]{Vakar2018}) states that there 
	exists
	an $r : (Y, σ_Y) × (X, σ_X) \to [0,\infty]$ such that $f(A\given x) =
	\int_{A} r(y,x) · g(dy|x)$ for all measurable sets $A ∈ σ_y$.
	It only remains to show that one can pick $r$ to be $[0,1]$-valued. Fix an
	$x ∈ X$, and let $B_x = \{ y ∈ Y \mid r(y,x) > 1 \}$. This $B_x$ is measurable. We show that $g(B_x|x) = 0$. By assumption,
	\[ ∫_{B_x} r(y,x) ·  g(dy \given x) 
	=  f(B_x\given x) 
	≤  g(B_x\given x) 
	=  ∫_{B_x} 1 · g(dy\given x).\] 
	By definition of $B_x$, the converse inequality also holds, so $\int_{B_x}
	r(y,x) \cdot g(dy\given x) = \int_{B_x} 1 \cdot g(dy\given x)$. Let $q_x(y)
	= (r(y,x) - 1) \cdot \chi(y \in B_x)$. Then $q \ge 0$, and also $\int_{B_x}
	q_x(y) \cdot g(dy \given x) = 0$. Therefore, $q_x$ is $g(-\given x)$
	-almost-surely zero, and thus $r(-,x) \le 1$, almost surely. We now set 
	\begin{align*}
		p(y,x) = \begin{cases}
			r(y,x), & \text{if } r(y,x) \le 1, \\
			0, & \text{otherwise}.
		\end{cases}
	\end{align*} 	
	This $p$ then satisfies $0 \le p \le 1$, and $f(A \given x) = 
	\int_{A} 
	p(y,x) \cdot 
	g(dy\given x)$ for all $A \in \sigma_Y$.
	
	Finally, we check that $p$ is jointly measurable. This is easy, since for
	any measurable $T ⊆ [0,1]$ we have
	\begin{align*}
	  p^{-1}(T) = \begin{cases}
		r^{-1}(T), & \text{if } 0 \notin T, \\
		r^{1}(T) \cup r^{-1}((1,\infty)), &\text{otherwise}.
	  \end{cases}
	\end{align*}
\end{proof}

From the previous proposition it is easy to see that $\BorelSubStoch$ 
also satisfies the \kl{sharp witness property}. The argument is very
similar to what we presented for $\FinSubStoch$.

\paragraph{Cartesian restriction categories}

We now instantiate Definition~\ref{def:preorder-partial-markov} in \kl{cartesian
restriction categories}. Such categories have a canonical poset
enrichment~\cite[\S 2.1.4]{cockett2002restriction}, where $f ≤ g$ if and only if
$f \normalisedby g$. This means that $f$ is a restriction of $g$ on a smaller
domain. Since all morphisms are deterministic, "cartesian restriction
categories" satisfy the "sharp witness property".

\begin{proposition}%
  \label{prop:cart_restr_cat_order_equiv}%
  In a \kl{cartesian restriction category}, the \kl{conditional 
  preorder} and the \kl{restriction order} coincide: \(f ⊑ g\) if and 
  only if \(f 
  \normalisedby g\).
\end{proposition}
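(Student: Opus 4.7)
The plan is to apply Proposition~\ref{prop:order_equivalence}, which asserts that in a \kl{balanced} \kl{copy-discard category} satisfying the sharp witness property, \kl{conditional inequality} coincides with the restriction order on \kl{quasi-total} morphisms. So all I need to do is verify that every cartesian restriction category meets the three hypotheses of that proposition.

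First, the earlier proposition in the preliminaries already shows that every cartesian restriction category is a \kl{balanced} \kl{partial Markov category}, so balance and the existence of conditionals are free. Second, since every morphism in a cartesian restriction category is \kl{deterministic} by definition, every morphism is in particular \kl{quasi-total}; hence the parallel pair $f, g \colon X \to Y$ satisfies the quasi-totality hypothesis automatically. Third, any candidate witness $r \colon Y \otimes X \to I$ of an inequality $f \leqprob g$ is itself a morphism of the cartesian restriction category, and is therefore already deterministic, so the sharp witness property holds vacuously.

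With these three observations in place, Proposition~\ref{prop:order_equivalence} applies directly and yields the equivalence $f \sqsubseteq g \iff f \normalisedby g$. I would note in passing that the implication $f \normalisedby g \Rightarrow f \sqsubseteq g$ is in fact the content of Lemma~\ref{lem:cond_preord_implies_normalisedby}, and that the substantive direction is the converse, which is where balance and sharp witnesses are actually used in the proof of Proposition~\ref{prop:order_equivalence}. Since neither balance nor sharpness pose any difficulty in the cartesian restriction setting --- the first by a previously established result, the second by determinism of every morphism --- there is essentially no obstacle here; the proof is a two-line invocation of already developed machinery.
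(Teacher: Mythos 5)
Your proof is correct and follows exactly the same route as the paper: both verify that cartesian restriction categories are balanced (by the earlier proposition), that all morphisms are quasi-total and all witnesses deterministic (so the sharp witness property holds), and then invoke Proposition~\ref{prop:order_equivalence}. The only quibble is that the sharp witness property holds not ``vacuously'' but because determinism of every morphism makes any witness automatically sharp --- which is the reason you in fact give.
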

\begin{proof}
  Since \kl{cartesian restriction categories} are \kl{balanced} and satisfy the
  \kl{sharp witness property}, this is immediate from Proposition
  \ref{prop:order_equivalence}.
\end{proof}

\paragraph{Cartesian bicategories of relations}
Let us instantiate Definition~\ref{def:preorder-partial-markov} in \kl{cartesian
bicategories of relations}. These also have a canonical poset enrichment where
\(f \leq g\) means that \(f\) is a `subrelation' of \(g\), in the following
sense:
\begin{equation}\label{eq:inequality-rel}
  f \leq g \quad\text{if and only if}\quad \morphismFig{f} =
  \convolutionFig{g}{f} \ .
\end{equation}
This presentation of the poset enrichment is equivalent to the usual
one~\cite[Lemma 4.1.5]{Nester2024}. The next proposition relates the two
orderings.

\begin{proposition}
	\label{prop:bicat_of_rel_order_equiv}
	In a \kl{cartesian bicategory of relations}, for two parallel
	morphisms \(f, g \colon X \to A\), \(f \leqprob g\) if and only 
	if \(f \leq g\).
\end{proposition}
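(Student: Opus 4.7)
The plan is to derive the two directions separately, using \Cref{prop:least_subunital_enrichment} for the easy direction and the compact closed / Frobenius structure of \kl{cartesian bicategories of relations} for the other.

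\textbf{Forward direction} (\(\leqprob\) implies \(\leq\)): I would first observe that the subrelation order on a \kl{cartesian bicategory of relations} is a subunital enrichment: for any effect \(p \colon X \to I\), \(p \leq \discard_X\), since in \(\Rel\) every subset of \(X\) is contained in \(X\), and abstractly this follows from lax naturality of the discard together with the adjointness of \(\discard \dashv \cap\) (\Cref{fig:original-cartesian-bicategories}). Having verified subunitality, the forward implication is immediate from \Cref{prop:least_subunital_enrichment}.

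\textbf{Converse direction} (\(\leq\) implies \(\leqprob\)): Suppose \(f \leq g\), i.e., \(f = \convolutionFig{g}{f}\) by \eqref{eq:inequality-rel}. Since every \kl{cartesian bicategory of relations} is a \kl{copy-discard-compare category}, by \Cref{prop:cap_iff_cdc} we have \kl{caps} \(\cap_{A}\) available. I would construct the witness \(r \colon A \otimes X \to I\) by ``bending'' \(f\) with a cap, namely
\[ r \;=\; (\id_{A} \otimes f) \dcomp \cap_{A}. \]
Expanding the definition of \kl{conditional composition},
\[ g \condcomp r \;=\; \delta_{X} \dcomp ((g \dcomp \delta_{A}) \otimes \id_{X}) \dcomp (\id_{A} \otimes r), \]
and substituting the chosen \(r\), one applies the Frobenius axiom for caps (\Cref{def:cap}.\ref{def:cap:frobenius}), exactly as in the simplification step of \Cref{lemma:rel-partial-markov}, to rewrite the right-hand side as the convolution \(\delta_{X} \dcomp (g \otimes f) \dcomp \mu_{A}\). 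By \eqref{eq:inequality-rel}, this equals \(f\), hence \(r\) witnesses \(f \leqprob g\).

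The main technical work lies in the converse direction: the Frobenius manipulation that turns \(g \condcomp r\) into the convolution form is the only non-trivial step. It is essentially a diagrammatic calculation analogous to the ``conditional bending'' step already used in \Cref{lemma:rel-partial-markov}, and no further axioms of \kl{cartesian bicategories of relations} (beyond the Frobenius rule for caps and the characterisation \eqref{eq:inequality-rel}) are required.
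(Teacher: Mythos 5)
Your proposal is correct and follows essentially the same route as the paper: the forward direction is the same appeal to Proposition~\ref{prop:least_subunital_enrichment} via subunitality of \((\leq)\), and your witness \(r = (\id_{A} \otimes f) \dcomp \cap_{A}\) is literally the paper's witness \((\id \otimes f) \dcomp \compare \dcomp \discard\), since \(\cap_{A} = \compare_{A} \dcomp \discard_{A}\) by construction, with the same Frobenius manipulation reducing \(g \condcomp r\) to the convolution \(\delta_{X} \dcomp (g \otimes f) \dcomp \compare_{A} = f\).
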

\begin{proof}
	\((\Rightarrow)\) We apply 
	Proposition~\ref{prop:least_subunital_enrichment}
	because \((\leq)\) is also subunital.\\
	\((\Leftarrow)\) If \(f \leq g\) (\emph{i}), then we may take 
	\(r = (\id{} \tensor f) \dcomp \compare \dcomp \discard\) and 
	obtain that \(f \leqprob g\) by the Frobenius axioms (\emph{ii}).
	\[\morphismFig{f} \overset{(i)}{=} \convolutionFig{g}{f} 
	\overset{(ii)}{=} \relinequalityProofFigOne \]
\end{proof}

Every morphism in a \kl{cartesian bicategory of relations} is
"quasi-total". However, since these categories are not 
"balanced", the "restriction" order may differ from the "conditional 
preorder". The following example demonstrates that this is indeed the 
case.

\begin{example}
	\label{ex:orders_differ}
	Let $R,S \subseteq \mathbb{N} \times \mathbb{N}$ be morphisms in 
	\(\Rel\), given by $R = \{ (1,2) \}$ and $S = \{ (1,2),(1,3) \}$. 
	Then $R \le S$, but $R \not\preceq S$, since $\delta 
	\comp (R \otimes (S \comp \discard)) = R \ne S$.
\end{example}

\paragraph{Quasi-Markov categories}
Quasi-Markov categories are \kl{copy-discard categories} where all morphisms are \kl{quasi-total}~\cite{fritz2025empiricalmeasuresstronglaws}.
These are poset-enriched with the \kl{restriction} order~\cite[Proposition~2.16]{fritz2025empiricalmeasuresstronglaws}.

\begin{proposition}
  In a quasi-Markov category with conditionals, for two parallel morphisms \(f,g \colon X \to A\), \(f \leqprob g\) if and only if \(f \normalisedby g\).
\end{proposition}
\begin{proof}
  A quasi-Markov category with conditionals is, in particular, a \kl{partial Markov category}.
  By Lemma~\ref{lem:cond_preord_implies_normalisedby}, if \(f \normalisedby g\), then \(f \leqprob g\).
  By Proposition~\ref{prop:subunitality}, since every map is 
  quasi-total, the \kl{restriction 
  preorder} is subunital order enrichment.
  We can, then, apply Proposition~\ref{prop:least_subunital_enrichment} to obtain the other implication.
\end{proof}

\section{Least conditionals}\label{sec:least_conditionals}

Asking \partialMarkovCategories{} to have unique \kl{conditionals} is too 
strong. In any \partialMarkovCategory{} with unique \kl{conditionals}, \(f 
\comp \discard_{Y} = g \comp 
\discard_{Y}\) implies $f = g$ for all parallel maps $f,g \colon X 
\to Y$ \cite{2023partialmarkov}. As a special case, Markov 
categories with unique conditionals are posetal, see 
\cite[Remark 11.16]{fritz2020synthetic}. 

However, once we have an order in \partialMarkovCategories{}, there might be a
canonical choice of conditional, namely the smallest among all \kl{conditionals}. In
this section, we consider \kl{least conditionals} with respect to two orders on
"quasi-total" maps: the \kl{conditional preorder} $(\leqprob)$ and the
"restriction" order $(\normalisedby)$.

\begin{definition}%
  {\textbf{(Least conditionals)}} %
  Let $f: X → Y ⊗ Z$ be a morphism in a (locally small) \kl{partial Markov category}.
  Write $C_f = \{c ፡ Z ⊗ X → Y \mid f = (f ⨾ (\id{} ⊗ \discard)) \condcomp c \land c \normalisedby c\}$ for the set of "quasi-total" \kl{conditionals} of $f$.
  Now, we say that a \kl{conditional}, $g ∈ C_f$, is
  \begin{enumerate}
    \item a $\leqprob$-\intro{least conditional} of $f$ when it is a least
    element of
    $C_f$, ordered by the \kl{conditional preorder}, $(\leqprob)$;
    \item or, a $\normalisedby$-least conditional of $f$ if it is the least
    element of $C_f$ ordered by the \kl{restriction order} $(\normalisedby)$.
  \end{enumerate}
  By Lemma \ref{lem:cond_preord_implies_normalisedby}, every ($\normalisedby$)-least \kl{conditional} is a ($\leqprob$)-least
  \kl{conditional}.
\end{definition}

\subsection{Least conditionals and copy-discard-compare categories}

We demonstrate how \kl{least conditionals} are related to the existence of
comparators. We start by characterizing \kl{conditionals} of copy and 
identity maps in a \kl{copy-discard category}.

\begin{lemma}\label{lem:copy_id_disint}
  Let $\cat{C}$ be a \kl{copy-discard category}.
  \begin{enumerate}
    \item\label{lem:copy_id_disint_copy} A morphism $f \colon X ⊗ X → X$ is a
      \kl{conditional} of the copy $\cp_X ፡ X → X ⊗ X$ if and only if $\cp_X ⨾ f
      = \id_X$.
    \item\label{lem:copy_id_disint_id} A morphism $g ፡ X ⊗ X → I$
      is a \kl{conditional} of the identity $\id_X ፡ X → X$ if and only
      if $\cp_X ⨾ g = ε_X$.
  \end{enumerate}
  Note that Condition~(\ref{lem:copy_id_disint_copy}) holds for the multiplication ($\comparator_X$) by definition of \kl{copy-discard-compare category}.
  Similarly, the definition of the \kl{cap} $(\cap_X)$ contains Condition~(\ref{lem:copy_id_disint_id}).
\end{lemma}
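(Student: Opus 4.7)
The plan is to prove both items directly from the definition of conditional by unfolding the conditional composition and using the comonoid axioms (counitality and coassociativity). Both parts are essentially the same calculation; the only bookkeeping difference is whether the second factor of the output is $X$ or $I$.

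For item (\ref{lem:copy_id_disint_copy}), I first simplify the ``marginal'' of the copy: by counitality, $\delta_X \mathbin{\fatsemi} (\id_X \otimes \epsilon_X) = \id_X$. Thus $f \colon X \otimes X \to X$ is a conditional of $\delta_X$ if and only if $\delta_X = \id_X \condcomp f$. Expanding the conditional composition and using coassociativity of $\delta_X$, I rewrite
\[
\id_X \condcomp f \;=\; \delta_X \mathbin{\fatsemi} (\delta_X \otimes \id_X) \mathbin{\fatsemi} (\id_X \otimes f) \;=\; \delta_X \mathbin{\fatsemi} (\id_X \otimes (\delta_X \mathbin{\fatsemi} f)).
\]
Now the equivalence is essentially a Yoneda-style recovery: if $\delta_X \mathbin{\fatsemi} f = \id_X$ the right-hand side collapses to $\delta_X$ by $\delta_X \mathbin{\fatsemi} (\id_X \otimes \id_X) = \delta_X$; conversely, if $\delta_X = \delta_X \mathbin{\fatsemi} (\id_X \otimes (\delta_X \mathbin{\fatsemi} f))$, I post-compose with $\epsilon_X \otimes \id_X$ and apply counitality to both sides to extract $\delta_X \mathbin{\fatsemi} f = \id_X$. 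This is most transparent in string-diagrammatic form, which is how I would present the proof.

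Item (\ref{lem:copy_id_disint_id}) is the same argument with $Y \otimes Z = X \otimes I = X$. Viewing $\id_X \colon X \to X \otimes I$, the marginal factor is again $\id_X$, and $g \colon X \otimes X \to I$ is a conditional of $\id_X$ iff $\id_X = \id_X \condcomp g$. Unfolding as above gives $\delta_X \mathbin{\fatsemi} (\id_X \otimes (\delta_X \mathbin{\fatsemi} g)) = \id_X$; substituting $\delta_X \mathbin{\fatsemi} g = \epsilon_X$ makes the left-hand side collapse to $\id_X$ by counitality. For the converse direction, post-composing both sides with $\epsilon_X$ and applying counitality yields $\delta_X \mathbin{\fatsemi} g = \epsilon_X$. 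There is no real obstacle; the only thing to be careful about is treating $X \otimes I$ strictly as $X$ so that the typing of $\epsilon_X$ in the final step is unambiguous. The closing remark of the lemma (that $\mu_X$ and $\cap_X$ satisfy these equations by definition) then follows directly from the definitions of comparator and cap in Sections \ref{def:copyDiscardCompareCategory} and \ref{def:cap}.
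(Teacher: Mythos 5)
Your proof is correct and takes essentially the same route as the paper's: both simplify the marginal of $\copier_X$ (resp.\ $\id_X$) to $\id_X$ by counitality, use coassociativity to rewrite $\id_X \condcomp f$ as $\copier_X \dcomp (\id_X \otimes (\copier_X \dcomp f))$, and recover the converse by discarding the appropriate output. The paper merely presents the identical calculation in string-diagrammatic form.
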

\begin{proof}
  For point (i), suppose that $\cp_{X} ⨾ f = \id_X$. Then, the defining equation
  of conditionals can be derived as below left. Conversely, if $f$ is a
  conditional of $\copier_X$, then the middle equality on the right holds by
  definition. Is then easy to see that $\copier_X ⨾ f = \id_X$.	 
  \[\includegraphics{fig-copyconditionalproof.pdf}\]
  For point (ii), suppose that $\cp_{X} \dcomp g =\discard_X$. Then, the
  defining equation of conditionals can be derived as below left.
  Conversely, discarding the output on both sides of the equality on the left
  below shows $\cp_{X} \dcomp g =\discard_X$, as below right.
  \[\includegraphics{fig-identityconditionalproof.pdf}\]
\end{proof}

\begin{lemma}%
  \label{lem:id_disint_from_copy_disint} %
  Let $\cat{C}$ be a \kl{copy-discard category}. If $f ፡ X ⊗ X → X$ is a
  \kl{conditional} of the copy $\cp_X ፡ X → X ⊗ X$, then $f ⨾ ε_X ፡ X ⊗ X → I$
  is a \kl{conditional} of the identity map $\id_X ፡ X → X$. Moreover, if $f$ is
  the $\normalisedby$-\kl{least conditional} of $\copier_X$ then $f ⨾ ε_X$ is the
  $\normalisedby$-\kl{least conditional} of $\id_X$.
\end{lemma}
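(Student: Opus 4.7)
The first assertion follows immediately from \Cref{lem:copy_id_disint}. If $f$ is a conditional of $\copier_X$, then part~(i) gives $\copier_X \dcomp f = \id_X$; post-composing both sides with $\discard_X$ yields $\copier_X \dcomp (f \dcomp \discard_X) = \discard_X$, which by part~(ii) means that $f \dcomp \discard_X$ is a conditional of $\id_X$.

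For the moreover part, my plan is to lift each conditional of $\id_X$ to a conditional of $\copier_X$ that recovers it after discarding, and then invoke leastness of $f$. Given an arbitrary $g \in C_{\id_X}$, I would define
\[ \tilde g \;:=\; (\copier_X \otimes \id_X) \dcomp (\id_X \otimes g) \colon X \otimes X \to X. \]
Three properties of $\tilde g$ are needed. First, $\copier_X \dcomp \tilde g = \id_X$: by coassociativity of $\copier_X$ this expression rewrites as $\copier_X \dcomp (\id_X \otimes (\copier_X \dcomp g))$, and since $\copier_X \dcomp g = \discard_X$ by \Cref{lem:copy_id_disint}(ii) applied to $g$, counitality finishes the calculation; hence by \Cref{lem:copy_id_disint}(i), $\tilde g$ is a conditional of $\copier_X$. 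Second, $\tilde g \dcomp \discard_X = g$: by functoriality of $\otimes$ and counitality of copy, the trailing $\discard_X$ slides past $\id_X \otimes g$ and collapses one leg of the preceding copy. Third, $\tilde g$ must be quasi-total; I expect this to be the most delicate step, and the argument is a short diagrammatic calculation using uniformity of copying together with quasi-totality of $g$ (equivalently, the identity $\copier_{X \otimes X} \dcomp (g \otimes g) = g$).

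Once $\tilde g \in C_{\copier_X}$ is in place, the $\normalisedby$-leastness of $f$ yields $f \normalisedby \tilde g$, that is,
\[ f \;=\; \copier_{X \otimes X} \dcomp (\tilde g \otimes (f \dcomp \discard_X)). \]
Post-composing both sides with $\discard_X$ and substituting $\tilde g \dcomp \discard_X = g$ gives
\[ f \dcomp \discard_X \;=\; \copier_{X \otimes X} \dcomp (g \otimes (f \dcomp \discard_X)), \]
which is precisely $f \dcomp \discard_X \normalisedby g$. Since $g$ was arbitrary in $C_{\id_X}$, this shows that $f \dcomp \discard_X$ is the $\normalisedby$-least conditional of $\id_X$.
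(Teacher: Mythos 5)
Your proof is correct and follows essentially the same route as the paper's: the paper lifts $g$ to the conditional $t = (\id_X \otimes \copier_X) \comp (g \otimes \id_X)$ of $\copier_X$ (the mirror image of your $\tilde g$), invokes $\normalisedby$-leastness of $f$, and discards both sides. Your explicit verification that $\tilde g$ is quasi-total, reduced to quasi-totality of $g$ via $\copier_{X \otimes X} \comp (g \otimes g) = g$, is a detail the paper leaves implicit, and it goes through exactly as you sketch.
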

\begin{proof}
  By Lemma~\ref{lem:copy_id_disint}, $f ⨾ ε_X$ is a conditional of the identity.
  Now assume that $f$ is the $\normalisedby$-\kl{least conditional} of the copy map,
  and let $g ፡ X ⊗ X → I$ be a conditional of the identity. We have to show that
  $f ⨾ ε_X \normalisedby g$. Let $t = (\id_X ⊗ \cp_X) ⨾ (g ⊗ \id_X)$. Step
  $(a)$ below left uses that $g$ is a \kl{conditional} of the identity to show that
  this $t$ is a conditional of the copy map \(\cp_{X}\). Since $f$ is the
  $\normalisedby$-\kl{least conditional} of the copy map, $\cp_X$, then $f
  \normalisedby t$ (below right).
  \[\includegraphics{fig-leastconditionalsidentity.pdf}\]
  Discarding both sides of the above equation yields $f ⨾ ε_X  \normalisedby g$.
\end{proof}

We are ready to show that, if they exist, comparators and \kl{caps} are least
conditionals.

\begin{theorem}%
  \label{prop:comp_cap_least_disint} %
  Let $\cat{C}$ be a \kl{copy-discard-compare category}. The multiplication $μ_X
  ፡ X ⊗ X → X$ is the $\normalisedby$-\kl{least conditional} of the copy map $\cp_X ፡
  X → X ⊗ X$. The \kl{cap}, $\cap_X ፡ X ⊗ X → I$, is the \kl{least conditional} of
  the identity $\id_X ፡ X → X$ with respect to the \kl{restriction 
  order} $(\normalisedby)$. By Lemma 
  \ref{lem:cond_preord_implies_normalisedby},
  $\comparator_X$ and $\cap_X$ are also $\leqprob$-\kl{least conditionals}.
\end{theorem}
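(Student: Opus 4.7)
The plan is to prove the statement about $\mu_X$ first and then obtain the statement about $\cap_X$ as an immediate corollary via Lemma \ref{lem:id_disint_from_copy_disint}. As noted in the proof of Proposition \ref{prop:cap_iff_cdc}, one may take $\cap_X = \mu_X \dcomp \discard_X$, so once $\mu_X$ is shown to be the $\normalisedby$-least conditional of $\delta_X$, the lemma directly yields that $\cap_X$ is the $\normalisedby$-least conditional of $\id_X$.

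For the $\mu_X$-statement, the heart of the argument is the equation
\[
  \mu_X \;=\; \delta_{X \otimes X} \dcomp (g \otimes \cap_X),
\]
which I would establish for every morphism $g \colon X \otimes X \to X$ satisfying $\delta_X \dcomp g = \id_X$, the defining condition for being a conditional of $\delta_X$ by Lemma \ref{lem:copy_id_disint}(i). Once this is proved, everything else drops out. Specialising to $g = \mu_X$ (which satisfies $\delta_X \dcomp \mu_X = \id_X$ by the comparator axiom) yields $\mu_X = \delta_{X \otimes X} \dcomp (\mu_X \otimes \cap_X)$, i.e.\ $\mu_X \normalisedby \mu_X$, so $\mu_X$ is \kl{quasi-total} and belongs to $C_{\delta_X}$; applied to an arbitrary $g \in C_{\delta_X}$, the same equation is exactly $\mu_X \normalisedby g$.

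To prove the displayed equation, I would rewrite $\cap_X = \mu_X \dcomp \discard_X$ and invoke counitality of copy, $\delta_X \dcomp (\id_X \otimes \discard_X) = \id_X$, to reduce the problem to the Frobenius-like identity
\[
  \delta_{X \otimes X} \dcomp (g \otimes \mu_X) \;=\; \mu_X \dcomp \delta_X.
\]
This identity should be derivable by expanding $\delta_{X \otimes X}$ with uniformity of copy, using the Frobenius axiom in the form $\mu_X \dcomp \delta_X = (\delta_X \otimes \id_X) \dcomp (\id_X \otimes \mu_X)$, and invoking $\delta_X \dcomp g = \id_X$ to absorb the extra copies introduced by $g$.

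The main obstacle is precisely this Frobenius gymnastics. Intuitively, the $\mu_X$ on the right-hand side forces its two inputs to agree, and on the diagonal the $g$-property $\delta_X \dcomp g = \id_X$ collapses $g(a,a)$ to $a$; so $g(a,b)$ can be replaced by $a$ ``in the scope of $\mu_X$'', recovering $\mu_X \dcomp \delta_X$. Translating this informal picture into rigorous string-diagrammatic rewrites, built from uniformity, Frobenius, and the $g$-property, is, I expect, the most delicate step of the proof.
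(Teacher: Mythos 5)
Your proposal is correct and takes essentially the same route as the paper: reduce everything to showing $\mu_X \normalisedby g$ for each $g$ with $\copier_X \comp g = \id_X$ (Lemma~\ref{lem:copy_id_disint}), derive this equation from the Frobenius axioms, and transfer the statement to $\cap_X$ via Lemma~\ref{lem:id_disint_from_copy_disint}. The Frobenius computation you defer does go through from exactly the ingredients you list, and your observation that the case $g = \mu_X$ yields quasi-totality of $\mu_X$ is a welcome detail the paper leaves implicit.
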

\begin{proof}
  By Lemma~\ref{lem:copy_id_disint}, $\comparator_X$ and $\cap_X$ are conditionals of the copy and identity maps, respectively.
  We show that $\comparator_X$ is the $\normalisedby$-\kl{least conditional} of $\cp_X$.
  Let $f \colon X \tensor X \to X$ be any conditional of the copy map.
  Using \emph{(i)} the Frobenius equations and \emph{(ii)} Lemma~\ref{lem:copy_id_disint}, we show that \(f \normalisedby \comparator_{X}\).
  \[\comparatorleastconditionalProofFig{}\]
  By Lemma~\ref{lem:id_disint_from_copy_disint}, the cap is the 
  $\normalisedby$-\kl{least conditional} of the identity.
\end{proof}

Since $\normalisedby$-\kl{least conditionals} are unique, the following is an
immediate consequence of the previous proposition.

\begin{corollary}
  \label{cor:cdc_unique}
  A \kl{copy-discard category} can admit at most one \kl{copy-discard-compare}
  structure.
\end{corollary}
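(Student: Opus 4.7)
The plan is to derive uniqueness of the copy-discard-compare structure directly from Theorem \ref{prop:comp_cap_least_disint} together with the fact that least elements in a partial order are unique. The only additional datum a \kl{copy-discard-compare} structure adds to a \kl{copy-discard category} $\cat{C}$ is the family of comparators $\mu_X \colon X \otimes X \to X$; the counit, comultiplication, and symmetric monoidal structure are already fixed, and (by Proposition \ref{prop:cap_iff_cdc}) the \kl{caps} can be reconstructed from the comparators. So it suffices to show that the comparators are determined by the \kl{copy-discard} structure.

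Suppose then that $\cat{C}$ carries two \kl{copy-discard-compare} structures extending the same underlying \kl{copy-discard category}, with comparators $\mu_X$ and $\mu'_X$ respectively. Applying Theorem \ref{prop:comp_cap_least_disint} to each structure separately, both $\mu_X$ and $\mu'_X$ are $\normalisedby$-least \kl{conditionals} of the copy map $\cp_X \colon X \to X \otimes X$. By the definition of \kl{conditionals}, every element of the set $C_{\cp_X}$ is \kl{quasi-total}, and the \kl{restriction order} $(\normalisedby)$ is a partial order on \kl{quasi-total} maps (as recorded after the definition of \kl{restriction}, citing \cite[Lemma 98]{Lorenz2023}). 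Since $\normalisedby$-least elements in a partial order are unique, we conclude $\mu_X = \mu'_X$ for every object $X$.

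There is essentially no obstacle here: the work has been done in Theorem \ref{prop:comp_cap_least_disint}, which exhibits the comparator as a canonical element of the poset $(C_{\cp_X}, \normalisedby)$. The only subtlety worth flagging explicitly in the write-up is that antisymmetry of $\normalisedby$ requires \kl{quasi-totality}, which is built into our definition of \kl{conditional}; without this, one would only get $\mu_X \normalisedby \mu'_X$ and $\mu'_X \normalisedby \mu_X$, not $\mu_X = \mu'_X$.
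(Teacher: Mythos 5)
Your proof is correct and follows essentially the same route as the paper, which derives the corollary immediately from Theorem \ref{prop:comp_cap_least_disint} together with the uniqueness of $\normalisedby$-least conditionals. Your additional observations — that the comparator family is the only extra datum of a \kl{copy-discard-compare} structure, and that antisymmetry of $(\normalisedby)$ relies on the quasi-totality built into the definition of \kl{conditionals} — are accurate elaborations of what the paper leaves implicit.
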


Next, we prove a converse to Theorem~\ref{prop:comp_cap_least_disint}. We
demonstrate that, under certain conditions, \kl{least conditionals} of copier maps
give rise to a \kl{copy-discard-compare structure}. As it turns out, one needs
to consider the \kl{restriction order} $\normalisedby$ to prove the 
following
proposition.

\begin{theorem}\label{prop:cap_from_conditional}
  Let $\cat{C}$ be a \kl{copy-discard category}, and assume the following.
  \begin{enumerate}
    \item All copy maps $\cp_X ፡ X → X ⊗ X$ have a $\normalisedby$-least
    conditional, $μ_X ፡ X ⊗ X → X$. By
    Lemma~\ref{lem:id_disint_from_copy_disint}, the identity maps then also have
    $\normalisedby$-\kl{least conditionals} $\cap_X = \comparator_X ⨾
    \discard_X ፡ X ⊗ X → I$.
    \item The family of conditionals $\cap_X$ satisfy Condition~(\ref{def:cap:uniform}) of
    Definition~\ref{def:cap}.
  \end{enumerate}
  In this situation, the conditionals $\cap_X$ satisfy the axioms of 
  "caps". By Proposition~\ref{prop:cap_iff_cdc}, $\cat{C}$ is thus a
  \kl{copy-discard-compare category}.
\end{theorem}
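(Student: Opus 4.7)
The plan is to verify, one at a time, the three cap axioms not already given by hypothesis — commutativity, the compatibility $\copier \comp \cap = \discard$, and the Frobenius rule — for the family $\cap_X = \mu_X \comp \discard_X$; Proposition \ref{prop:cap_iff_cdc} then packages these into the copy-discard-compare structure. Axiom~2 ($\copier \comp \cap = \discard$) is immediate: Lemma \ref{lem:id_disint_from_copy_disint} identifies $\cap_X$ as a $\normalisedby$-least conditional of $\id_X$, and by Lemma \ref{lem:copy_id_disint}(ii) this is equivalent to the equation $\copier \comp \cap = \discard$.

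For commutativity, cocommutativity of $\copier$ gives $\copier \comp (\sigma \comp \cap) = \copier \comp \cap = \discard$, so $\sigma \comp \cap$ is itself a quasi-total conditional of $\id_X$, and minimality of $\cap$ yields $\cap \normalisedby \sigma \comp \cap$. The symmetry $\sigma$ is deterministic (a direct consequence of uniformity of $\copier$), and precomposition with a deterministic isomorphism preserves the restriction order; applying this to the previous inequality and using $\sigma^2 = \id$ gives $\sigma \comp \cap \normalisedby \cap$. Antisymmetry of $\normalisedby$ on quasi-total morphisms then forces $\sigma \comp \cap = \cap$.

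For the Frobenius axiom, set $L = (\copier \otimes \id) \comp (\id \otimes \cap)$ and $R = (\id \otimes \copier) \comp (\cap \otimes \id)$; I will show $L = \mu = R$. Coassociativity of $\copier$ combined with $\copier \comp \cap = \discard$ yields $\copier \comp L = \id$, so by Lemma \ref{lem:copy_id_disint}(i), $L$ satisfies the algebraic characterisation of conditionals of $\copier$, and counitality of $\copier$ similarly gives $L \comp \discard = \cap$. The chief technical step — and the main obstacle — is to show that $L$ is actually quasi-total, i.e., $L = \copier_{X \otimes X} \comp (L \otimes \cap)$. Once that is established, $L$ lies in the poset of quasi-total conditionals of $\copier_X$, and minimality of $\mu$ gives $\mu \normalisedby L$, which unfolds precisely to $\mu = \copier_{X \otimes X} \comp (L \otimes \cap) = L$; a symmetric argument (or an application of the cap and $\mu$ commutativities already available) handles $R$, so $L = R = \mu$.

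To verify the quasi-totality identity I would expand the inner $\cap$ appearing in $L$ and apply the diagrammatic identity $\copier_{X \otimes X} \comp ((\copier \otimes \id) \otimes \id_{X \otimes X}) = (\copier \otimes \id) \comp (\id \otimes \copier_{X \otimes X})$ — a straightforward consequence of uniformity and coassociativity of $\copier$ — to rewrite $\copier_{X \otimes X} \comp (L \otimes \cap)$ as $(\copier \otimes \id) \comp (\id \otimes \copier_{X \otimes X}) \comp ((\id \otimes \cap) \otimes \cap)$. The factor $\copier_{X \otimes X} \comp (\cap \otimes \cap)$ that now appears collapses to $\cap$ because $\cap$ is a quasi-total effect, hence deterministic: since $\discard_I = \id$, the self-normalisation $\cap \normalisedby \cap$ is exactly the equation $\cap = \copier_{X \otimes X} \comp (\cap \otimes \cap)$. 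One more application of functoriality then reduces the whole composite to $(\copier \otimes \id) \comp (\id \otimes \cap) = L$, as required.
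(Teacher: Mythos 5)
Your proof is correct and follows essentially the same route as the paper: axiom (2) comes from Lemmas \ref{lem:copy_id_disint} and \ref{lem:id_disint_from_copy_disint}, commutativity from applying minimality of $\cap_X$ to $\sigma\comp\cap_X$, and the Frobenius rule by showing that each side is a quasi-total conditional of $\copier_X$ and invoking minimality of $\comparator_X$, with the decisive computation $\copier_{X\otimes X}\comp(L\otimes\cap_X)=L$ for your $L$ resting on quasi-totality of the effect $\cap_X$ exactly as in the paper's steps $(c)$. The only differences are presentational: you conclude commutativity via antisymmetry of $(\normalisedby)$ on quasi-total maps (after transporting the inequality along the deterministic isomorphism $\sigma$) rather than the paper's direct equational chain, and you are slightly more explicit than the paper in first establishing quasi-totality of $L$ — so that $L$ genuinely lies in the poset $C_{\copier_X}$ — before applying minimality of $\comparator_X$, a point the paper handles only implicitly.
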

\begin{proof}
  We show that the equations
  (\ref{def:cap:commutative})-(\ref{def:cap:frobenius}) of
  Definition~\ref{def:cap} are satisfied. The equality (\ref{def:cap:special})
  $\cp_X \comp \cap_X =\discard_X$ holds by Lemma~\ref{lem:copy_id_disint}. We
  show commutativity. Using commutativity of the copy map and the equality in Condition~\((i)\),
  we obtain that $\cp_X ⨾ \swap_{X,X} ⨾ \cap_X = \discard_X$ also holds (below
  left). By Lemma~\ref{lem:copy_id_disint}, this implies that $\swap_{X,X} ⨾
  \cap_X$ is also a conditional of $\id_X$. Therefore, $\cap_X \normalisedby
  \swap_{X,X} ⨾ \cap_X$. We use this fact in steps $(a)$ and $(c)$ below right.
  Step $(b)$ applies naturality of the symmetries.
  \[\includegraphics{fig-capcommutative.pdf}\]
  \noindent
  At last, we need to show equation (\ref{def:cap:frobenius}), whose two sides
  we name \(t\) and \(u\).
  \[\capfrobeniusequationProofFig{}\]
  Using coassociativity of copy, equality \((i)\) and counitality of copy, one
  can easily show that \(\cp ⨾ t = \id = \cp ⨾ u\). By
  Lemma~\ref{lem:copy_id_disint}, we obtain that both $t$ and $u$ are
  \kl{conditionals} of the copy $\cp_X ፡ X → X ⊗ X$. We now prove that $t =
  \comparator_X = u$. Steps marked with $(a)$ use minimality of $\comparator_X$.
  Equations $(b)$ use associativity and commutativity of the copier. Steps $(c)$
  use that $\cap_X$ is \kl{quasi-total}. 
  \[\capfrobeniusequationProofFigOne{}\]
  \[\capfrobeniusequationProofFigTwo{}\]
  This shows that \(\cat{C}\) is a \kl{copy-discard category} with \kl{caps},
  which, by Proposition~\ref{prop:cap_iff_cdc}, is also a
  \kl{copy-discard-compare} category.
\end{proof}

The following is an immediate consequence of the previous theorem and 
Proposition \ref{prop:order_equivalence}.

\begin{corollary}
	Let $\cat{C}$ be a \kl{balanced} \kl{partial Markov category} 
	that satisfies the 
	\kl{sharp witness property}. Assume 
	that the copier and identity maps have $\leqprob$-least 
	conditionals 
	$\comparator_X$ and $\cap_X$, and that the $\cap_X$ maps satisfy 
	Condition~(\ref{def:cap:uniform}) of
	Definition~\ref{def:cap}. Then $\cat{C}$ is a \kl{discrete 
	partial Markov category}.
\end{corollary}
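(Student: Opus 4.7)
The plan is to reduce this corollary to the combination of Theorem~\ref{prop:cap_from_conditional} and Proposition~\ref{prop:order_equivalence}. The hypotheses give us $\leqprob$-least conditionals, whereas Theorem~\ref{prop:cap_from_conditional} requires $\normalisedby$-least conditionals; bridging this gap is the only real content, and it is handled entirely by Proposition~\ref{prop:order_equivalence}.

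First I would unpack the definition of the set $C_{\cp_X}$ of conditionals of the copy map. By Definition~\ref{def:conditionals}, every conditional is \kl{quasi-total}, so $C_{\cp_X}$ consists entirely of "quasi-total" parallel morphisms $X \otimes X \to X$, and $\comparator_X$ itself lies in $C_{\cp_X}$. Since $\cat{C}$ is \kl{balanced} and satisfies the \kl{sharp witness property}, Proposition~\ref{prop:order_equivalence} applies to any two parallel "quasi-total" maps. Hence for every $f \in C_{\cp_X}$ we have the biconditional $\comparator_X \leqprob f \Leftrightarrow \comparator_X \normalisedby f$. The assumption that $\comparator_X$ is the $\leqprob$-least element of $C_{\cp_X}$ therefore upgrades to: $\comparator_X$ is the $\normalisedby$-least element of $C_{\cp_X}$.

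Next, I would observe that we do not need to repeat this argument for the identity maps. Given that $\comparator_X$ is a $\normalisedby$-least conditional of $\cp_X$, Lemma~\ref{lem:id_disint_from_copy_disint} already tells us that $\comparator_X \comp \discard_X$ is automatically a $\normalisedby$-least conditional of $\id_X$; by uniqueness of least elements this forces $\cap_X = \comparator_X \comp \discard_X$, and so the hypothesis ensuring that the family $\cap_X$ satisfies Item~(\ref{def:cap:uniform}) of Definition~\ref{def:cap} is inherited directly. Thus the two hypotheses of Theorem~\ref{prop:cap_from_conditional} are met: the copier maps have $\normalisedby$-least conditionals, and the induced $\cap_X$ family is uniform.

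Finally, I would invoke Theorem~\ref{prop:cap_from_conditional} to conclude that $\cat{C}$ has \kl{caps}, and hence by Proposition~\ref{prop:cap_iff_cdc} is a \kl{copy-discard-compare category}. Since by hypothesis $\cat{C}$ is already a \kl{partial Markov category}, that is, it has \kl{conditionals}, it is a \kl{copy-discard-compare category} with \kl{conditionals}, which is precisely the definition (Definition~\ref{def:discretePartialMarkovCategory}) of a \kl{discrete partial Markov category}. I do not anticipate a real obstacle: all the heavy lifting happens in the two cited results, and the only subtle point is verifying that the elements of $C_{\cp_X}$ are quasi-total so that Proposition~\ref{prop:order_equivalence} is applicable — which is built into the definition of $C_f$.
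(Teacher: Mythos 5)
Your proof is correct and is exactly the argument the paper intends: the paper dismisses this corollary as "an immediate consequence of the previous theorem and Proposition~\ref{prop:order_equivalence}", and your write-up simply fills in the details of that reduction (using balance plus the sharp witness property to upgrade $\leqprob$-least conditionals to $\normalisedby$-least ones, then invoking Theorem~\ref{prop:cap_from_conditional}). The one subtlety you flag — that elements of $C_{\cp_X}$ are quasi-total so Proposition~\ref{prop:order_equivalence} applies — is handled correctly.
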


\subsection{Examples}

\begin{proposition}
	The category \(\FinSubStoch\) is \kl{balanced} and satisfies the \kl{sharp witness property}. Therefore, $\normalisedby$-least
	conditionals 
	are the same as $\leqprob$-\kl{least conditionals}.
	For a morphism $f \colon X \to A \otimes B$, indicate the total probability mass of the subdistribution $f(-\given x)$ with $\|f(-\given x)\| = \sum_{b'\in B} f(a,b' \given x)$.
  The least conditional $c_0 : A \otimes X \to B$ of $f$ is given by
	\begin{align*}
		c_{0}(b \given a,x)& = \begin{cases} \frac{f(a,b \given 
				x)}{\sum_{b' \in B} f(a,b' \given x)} & \text{if 
			}\|f(-\given x)\| \neq 0,\\ 0 & \text{if 
			}\|f(-\given x)\| = 0. \end{cases}
	\end{align*}
\end{proposition}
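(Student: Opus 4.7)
The plan is to split the statement into three parts: (a) $\FinSubStoch$ is balanced, (b) $\FinSubStoch$ satisfies the sharp witness property, and (c) the explicit formula for $c_0$ describes the $\normalisedby$-least conditional of $f$. Once (a) and (b) are in hand, Proposition~\ref{prop:order_equivalence} delivers the coincidence of $\normalisedby$-least and $\leqprob$-least conditionals on quasi-total maps, so it suffices to treat one of the two orderings.

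Parts (a) and (b) have essentially already been done in the earlier development: balance for $\FinSubStoch$ was noted to follow from the same argument as for $\BorelSubStoch$ via \cite[Proposition A.5.2]{Fritz2023}, and the sharp witness property was proved in Section~\ref{subsec:condpreorder_examples} by constructing an explicit $\{0,1\}$-valued witness for any pointwise dominated pair of quasi-total kernels. I would simply cite these rather than redo the calculations.

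For (c), I first verify that $c_0$ is a quasi-total conditional of $f$. Writing $f_A(a\given x) = \sum_{b'} f(a,b'\given x)$, quasi-totality via Remark~\ref{rem:self-normalising} reduces to checking $\sum_b c_0(b\given a,x)\in\{0,1\}$, which is immediate from the two-case definition (the nonzero case telescopes to $1$, the zero case is trivial). The defining equation of a conditional unfolds pointwise to $f_A(a\given x)\cdot c_0(b\given a,x) = f(a,b\given x)$, which likewise holds in both cases: off the support both sides vanish, on the support the denominator cancels.

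The main task is $\normalisedby$-minimality. Let $c\colon A\otimes X \to B$ be any quasi-total conditional of $f$. Unpacking $c_0 \normalisedby c$ pointwise gives
\[
c_0(b\given a,x) \;=\; c(b\given a,x)\cdot \mathbf{1}\!\left[f_A(a\given x) > 0\right],
\]
since $c_0 \comp \discard$ is precisely the deterministic effect indicating the support of $f_A$. On the support, the conditional equation for $c$ forces $c(b\given a,x) = f(a,b\given x)/f_A(a\given x) = c_0(b\given a,x)$; off the support, both sides vanish by definition of $c_0$. The only real subtlety, and the main obstacle to keep in mind, is that off the support $c$ is unconstrained beyond being quasi-total, so it is the zero choice of $c_0$ on this region that actually produces the restriction-order infimum rather than merely some particular conditional. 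The $\leqprob$-minimality then follows automatically from Lemma~\ref{lem:cond_preord_implies_normalisedby}.
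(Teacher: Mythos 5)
Your proposal is correct and follows essentially the same route as the paper: balance and the sharp witness property are imported from the earlier sections, the order coincidence follows from Proposition~\ref{prop:order_equivalence}, and the minimality argument is exactly the paper's observation that the conditional equation forces $c(b\mid a,x)=f(a,b\mid x)/\sum_{b'}f(a,b'\mid x)$ wherever that marginal is nonzero, while the zero choice elsewhere realises the restriction-order least element. Your write-up is in fact more explicit than the paper's (which leaves the verification as an informal paragraph), and your reading of the case split as depending on the marginal $\sum_{b'}f(a,b'\mid x)$ at each pair $(a,x)$ is the correct interpretation of the paper's slightly ambiguous notation $\|f(-\mid x)\|$.
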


The \kl{least conditional} is the one that contains no `junk information': for all elements $x \in X$ where \(f\) has positive probability of success, $\|f(- \given x)\| \ne 0$, the value $c(b \given a,x)$ is forced to be equal to $c_0(b \given a,x)$, for any \kl{conditional} $c : A ⊗ X → B$ of $f$;
for all elements \(x \in X\) where \(f\) certainly fails, $\|f(-\given x)\| = 0$, then $c(- \given a,x)$ could be any subdistribution.

The category $\BorelSubStoch$ does not have all \kl{least conditionals}.
We demonstrate this via the following example, adapted from 
\cite[Remark 5.2.10]{Cho2019a}.
\begin{example}
	Let $\mu$ be the Lebesgue-measure on the unit interval $[0,1]$, 
	viewed as a morphism $\mu : 1 \to [0,1] \otimes 1$ in 
	$\BorelSubStoch$. "Quasi-total" "conditionals" $c : [0,1] 
	\otimes 1 \to 1$ of $\mu$ correspond to Lebesgue-measurable 
	subsets $C \subseteq [0,1]$ such that for all measurable $B 
	\subseteq [0,1]$, $\mu(B) = \mu(B \cap C)$. The "restriction 
	order" $(\normalisedby)$ and the pointwise order $(\leqprob)$ 
	coincide on these conditionals, and translate to subset inclusion 
	on measurable subsets.
	For an $\alpha \in [0,1]$, the set $C_{\alpha} = [0,1] \setminus 
	\{\alpha\}$ satisfies $\mu(B) = \mu(B \cap C_{\alpha})$ for 
	all 
	measurable $B$. Therefore, a \kl{least conditional} $C_0$ must satisfy
	$C_0 \subseteq B_\alpha$ for all $\alpha \in [0,1]$. But such a 
	$C_0$ can only be the empty set $\emptyset$, which does not 
	satisfy $\mu(B 
	\cap \emptyset) = \mu(B)$ for all measurable $B$. 
	Therefore, a \kl{least conditional} of $\mu$ cannot exist.
\end{example}	
 
By Proposition \ref{prop:cart_restr_cat_order_equiv}, 
$\normalisedby$-\kl{least conditionals} are the same as $\leqprob$-least
conditionals in a discrete cartesian restriction category.
 
\begin{proposition}

  Any discrete cartesian restriction category has \kl{least conditionals}. Given a
  map $f : X → A ⊗ B$, the \kl{least conditional} of $f$ is given by \(c_{0} = (\id{}
  ⊗ f) ⨾ ((μ ⨾ ε) ⊗ \id{})\).
\end{proposition}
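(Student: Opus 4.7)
The plan is to verify two claims about $c_{0} = (\id_{A} \tensor f) \dcomp (\cap_{A} \tensor \id_{B})$, where $\cap_{A} = \comparator_{A} \dcomp \discard_{A}$ is the \kl{cap} of Definition \ref{def:cap}: (a) that $c_{0}$ is a \kl{conditional} of $f$, and (b) that $c_{0} \normalisedby c$ for every other conditional $c$ of $f$. Since a \kl{discrete cartesian restriction category} is a \kl{balanced} \kl{partial Markov category} satisfying the \kl{sharp witness property}, Proposition \ref{prop:cart_restr_cat_order_equiv} implies that the \kl{restriction} order and the \kl{conditional preorder} agree on \kl{quasi-total} maps, so $\normalisedby$-least and $\leqprob$-least conditionals coincide; thus (b) establishes minimality in either sense. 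Note also that $c_{0}$ is automatically \kl{quasi-total}, since every morphism in a cartesian restriction category is deterministic.

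For (a), I would unfold the defining equation of \kl{conditional composition} applied to $f_{A} := f \dcomp (\id_{A} \tensor \discard_{B})$ and $c_{0}$, then substitute the expression for $c_{0}$. Using determinism of $f_{A}$, the outer copy of $X$ can be pushed inwards so that the two $A$-wires entering $\cap_{A}$ are both copies of $f_{A}(x)$. The cap-copy identity $\copier_{A} \dcomp \cap_{A} = \discard_{A}$ (Definition \ref{def:cap}, item (\ref{def:cap:special})) then absorbs the cap, and what remains rearranges, via determinism of $f$, back to $f$ itself.

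For (b), let $c \colon A \tensor X \to B$ be another conditional of $f$, so that $f = f_{A} \condcomp c$. The goal is the defining equation $c_{0} = \copier_{A \tensor X} \dcomp (c \tensor (c_{0} \dcomp \discard_{B}))$ of $c_{0} \normalisedby c$. I would substitute $f = f_{A} \condcomp c$ into $c_{0}$ and then use determinism of $c$ to duplicate its output, together with the Frobenius rule for $\comparator$ (Definition \ref{def:copyDiscardCompareCategory}), to exhibit the resulting diagram in the canonical shape above. The \kl{quasi-totality} of $c$ ensures that discarding one of the duplicates interacts correctly with the cap.

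The main obstacle is the string-diagrammatic bookkeeping in (b): after substitution one must juggle copies of $a$ and $x$, two occurrences of $c$ (one surviving in the output and one absorbed inside $c_{0} \dcomp \discard_{B}$), and a cap linking two $A$-wires. Determinism, the cap-copy identity, and the Frobenius equations for $\comparator$ together should suffice to complete the reduction, after which the rest is routine monoidal-category calculation.
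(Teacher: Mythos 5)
Your proposal follows essentially the same route as the paper's proof: check that \(c_{0}\) is a (automatically quasi-total) conditional using determinism of \(f\) together with the cap--copy and Frobenius identities, then derive \(c_{0} \normalisedby c\) for any other conditional \(c\) by substituting the conditional equation for \(c\) and transferring the wire along the cap. The differences are cosmetic (you run the minimality rewrite in the opposite direction), and your opening reduction of \(\normalisedby\)-least to \(\leqprob\)-least via Proposition~\ref{prop:cart_restr_cat_order_equiv} is exactly the remark the paper makes immediately before the statement.
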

\begin{proof}
  We check that this is indeed a "quasi-total" "conditional". It 
  is "quasi-total" because every morphism is deterministic 
  (\emph{i}). We check
  the "conditional" equation using the partial Frobenius axioms 
  (\emph{ii}).
	\[\morphismOneTwoFig{f} \overset{(ii)}{=} 
	\relconditionalProofFigThree \overset{(i)}{=} 
	\conditionalrelsimplifiedFig \overset{(ii)}{=} 
	\conditionalbendingwireFig{f}{c_{0}}\]
  Suppose there is another "conditional", $c ፡ A ⊗ X → B$, of $f$ 
  (\emph{iii}).
	\[\parminimalconditionalProofFigOne \overset{(ii)}{=} 
	\parminimalconditionalProofFigTwo \overset{(iii)}{=} 
	\minimalconditionalbendwire = \morphismTwoOneFig{c_{0}}\]
  Then, \(c_{0} ≤ c\).
\end{proof}

To provide intuition, we explicitly describe \kl{least conditionals} in $\Par$.
The situation is analogous to $\FinSubStoch$.
Let $f \colon X \to A \otimes B$ be a partial function.
Suppose that \(f(x) = (a,b)\).
Then, every conditional \(c \colon A \tensor X \to B\) has to satisfy $c(a,x) = b$.
However, for all \(a'\) different from \(a\), $c(a',x)$ can be anything. Similarly, if $f(x)$ is undefined, $c(a',x)$ can be arbitrary.
The \kl{least conditional} $c_0$ is undefined
in all such cases and contains no `junk'.
\begin{align*}
	c_0(a,x) = \begin{cases}
		b &\text{ if } f(x) \text{ is defined and } f(x) = (a,b) \\
		\text{undefined} & \text{ otherwise}
	\end{cases}.	
\end{align*}

\begin{proposition}
  Any \kl{cartesian bicategory of relations} has $\normalisedby$-least
  \kl{conditionals}.
\end{proposition}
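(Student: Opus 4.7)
Following the construction of the previous proposition for \kl{discrete cartesian restriction categories}, I would define the candidate $\normalisedby$-least \kl{conditional} of $f \colon X \to A \otimes B$ as
\[ c_{0} = (\id_{A} \otimes f) \dcomp (\cap_{A} \otimes \id_{B}) \colon A \otimes X \to B, \]
using the cap $\cap_{A}$ available because a \kl{cartesian bicategory of relations} is a \kl{copy-discard-compare category} (Proposition~\ref{prop:cap_iff_cdc}). That $c_{0}$ is a \kl{quasi-total} \kl{conditional} of $f$ is essentially already shown in the proof of Lemma~\ref{lemma:rel-partial-markov}: every morphism in a \kl{cartesian bicategory of relations} is \kl{quasi-total}, so $c_{0}$ is too; and the \kl{conditional} equation $f = f_{A} \condcomp c_{0}$ (with $f_{A} = f \dcomp (\id_{A} \otimes \discard_{B})$) is exactly the content of the sandwich argument used there, bounding the morphism above and below by $f$ via lax naturality of copy/discard together with adjointness of the comonoid structure.

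The new work is minimality. Let $c \colon A \otimes X \to B$ be any \kl{quasi-total} \kl{conditional} of $f$; I want to prove
\[ c_{0} = \delta_{A \otimes X} \dcomp (c \otimes (c_{0} \dcomp \discard)). \]
A short calculation using uniformity of discard and functoriality gives $c_{0} \dcomp \discard = (\id_{A} \otimes f_{A}) \dcomp \cap_{A}$. I would then prove the required equation as two inequalities in the subrelation order on hom-sets, which coincides with the \kl{conditional inequality} here by Proposition~\ref{prop:bicat_of_rel_order_equiv}. For the first direction, substituting $f = f_{A} \condcomp c$ into the definition of $c_{0}$ and using the partial Frobenius axioms should yield $c_{0} \leq c$; combined with the self-restriction equation $c_{0} = \delta \dcomp (c_{0} \otimes (c_{0} \dcomp \discard))$ (valid because $c_{0}$ is \kl{quasi-total}) and monotonicity of the tensor, this gives $c_{0} \leq \delta_{A \otimes X} \dcomp (c \otimes (c_{0} \dcomp \discard))$. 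For the reverse inequality I would expand $\delta_{A \otimes X} \dcomp (c \otimes ((\id_{A} \otimes f_{A}) \dcomp \cap_{A}))$, substitute $f = f_{A} \condcomp c$, and apply the Frobenius identity $(\id_{A} \otimes \delta_{A}) \dcomp (\cap_{A} \otimes \id_{A}) = \mu_{A}$ (the equation that defines the comparator in terms of the cap) to recognise the resulting diagram as $c_{0}$.

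The main obstacle will be this reverse inequality. Since \kl{cartesian bicategories of relations} are not in general \kl{balanced}, I cannot invoke Proposition~\ref{prop:order_equivalence} to reduce to a simpler form. Instead, the string-diagrammatic calculation must lean directly on the richer ambient structure — adjointness of copy and comparator, adjointness of discard and codiscard, and the partial Frobenius equations. Intuitively, $c_{0} \dcomp \discard$ acts as a selector for the support of $f_{A}$: outside that support the expression vanishes, while on it $c$ agrees with $c_{0}$ since $c$ is a \kl{conditional} of $f$; the Frobenius identity is what lets one carry out this case analysis synthetically.
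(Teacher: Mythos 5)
Your proof is essentially the paper's: the same candidate $c_{0} = (\id \otimes f) \comp (\cap_{A} \otimes \id)$, the same appeal to Proposition~\ref{lemma:rel-partial-markov} for $c_{0}$ being a quasi-total conditional, and the same two ingredients --- the conditional equation for $c$ and the Frobenius axioms --- for minimality. The only difference is that the paper derives the restriction equation $c_{0} = \copier_{A \otimes X} \comp (c \otimes (c_{0} \comp \discard))$ directly as a string-diagrammatic \emph{equality} from those two ingredients, so your detour through two inequalities in the subrelation order, and your worry that the reverse inequality is a genuine obstacle requiring the adjointness structure, are unnecessary.
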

\begin{proof}
  By Proposition~\ref{lemma:rel-partial-markov}, the morphism $c_{0} = (\id{} ⊗
  f) ⨾ ((μ ⨾ ε) ⊗ \id{})$ is a \kl{quasi-total} \kl{conditional} of 
  $f ፡ X
  → A ⊗ B$. Suppose there is another \kl{quasi-total} 
  \kl{conditional}, $c
  ፡ A ⊗ X → B$, of $f$.
  \[\includegraphics{fig-relleastnormalisedbyconditionalsProof.pdf}\]
  We use that (\emph{i}) \(c\) is a conditional of \(f\), and (\emph{ii}) the
  Frobenius axioms of \kl{cartesian bicategories of relations} to show that \(c_{0}
  \normalisedby c\).
\end{proof}

\section{Updates increase validity}%
\label{sec:validity_increase}%

In its most general form, bayesian updating involves a prior 
distribution and a (fuzzy) evidence predicate. From these inputs, one 
can compute a posterior distribution by incorporating the evidence 
into the prior. 

Given a distribution $σ$ and a predicate $p$, the validity of $p$ in $σ$ is a
real number in the unit interval $[0,1]$. This validity expresses the
probability that the predicate $p$ holds for a random sample from $σ$.

A key property of bayesian updating is that the validity of the 
evidence is
greater or equal in the posterior than in the prior. A proof of this fact for
discrete probability can be found in~\cite[Theorem~2.1]{jacobs2019learning}.
Ultimately, the result is a consequence of the well-known Cauchy--Schwarz
inequality. The \emph{Cauchy--Schwarz inequality} states that the square of the
inner product of two vectors $u,v \in \Real^n$ is smaller than the product of
their squared norms:
\[
\left(\sum\nolimits_{i} u_{i} · v_{i}\right)^{2} ≤
\left(\sum\nolimits_{i} u_{i}^{2}\right)
\cdot \left(\sum\nolimits_{j} v_{j}^{2}\right).
\]
The inequality also holds if we allow parameters $x$, $y$ and 
\(z\), thus considering families of vectors $h : X \to \Real^n$ $f : 
Y \to \Real^n$, and $g : Z \to \Real^n$. If we write $f_i(x)$ for 
$f(x)_i$ (similarly for $g$ and $h$) and let \(u_{i} = 
\sqrt{h_{i}(x)} \cdot f_{i}(y)\) and \(v_{j} = \sqrt{h_{j}(x)} \cdot 
g_{j}(z)\), the 
Cauchy--Schwarz inequality becomes:
\begin{equation}
	\label{eq:cs_params}
	\left(\sum\nolimits_{i}h_{i}(x) \cdot f_{i}(y) \cdot 
	g_{i}(z)\right)^{2} \leq \left(\sum\nolimits_{i} h_{i}(x) \cdot 
	f_{i}(y)^{2}\right) \cdot \left(\sum\nolimits_{j} h_{j}(x) \cdot 
	g_{j}(z)^{2}\right).
\end{equation}

We consider this general form and express it synthetically, using the 
\kl{conditional preorder}.

\begin{definition}%
  \textbf{(Cauchy--Schwarz inequality)}%
  \AP A \kl{discrete partial Markov category} satisfies the
  \intro{Cauchy--Schwarz inequality} if, for all triples of morphisms $h ፡ X →
  A$, $f ፡ A → Y$, and $g ፡ A → Z$, the following inequality holds.
  \[\includegraphics{fig-cauchyschwartzinequality.pdf}\]
\end{definition}

\begin{example}
  Instantiating the above definition in \(\FinSubStoch\) yields precisely the
  usual \emph{Cauchy--Schwarz inequality} (\ref{eq:cs_params}). Therefore,
  \(\FinSubStoch\) satisfies the synthetic \kl{Cauchy--Schwarz inequality}.
\end{example}

\begin{example}
  The \kl{Cauchy--Schwarz inequality} holds as equality in 
  discrete \kl{cartesian restriction categories}.
\end{example}

\begin{example}
  \kl{Cartesian bicategories of relations} satisfy the
  \kl{Cauchy--Schwarz
  inequality}.
  \[\cartesianbicategoriescauchyschwarzFig\]
\end{example}

\begin{proposition}
  A \kl{partial Markov category} that satisfies the \kl{Cauchy--Schwarz
  inequality} also satisfies the inequality below, called the \intro{means
  inequality}, for all $u ፡ X → A$ and $v ፡ A → Y$.
  \[\meansinequalityFig{}\]
\end{proposition}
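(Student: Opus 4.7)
The plan is to derive the \kl{means inequality} as a direct corollary of the synthetic \kl{Cauchy--Schwarz inequality}, by specializing one of its three input morphisms to be the discard.

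Concretely, I would instantiate the \kl{Cauchy--Schwarz inequality} with $h := u$, $f := v$, and $g := \discard_A \colon A \to I$. This is the diagrammatic counterpart of the classical trick of taking the third vector in Cauchy--Schwarz to be the constant vector $\mathbf{1}$: since $\sum_{i} h_{i}(x) \cdot f_{i}(y) \cdot 1 = \sum_{i} h_{i}(x) \cdot f_{i}(y)$ and $\sum_{j} h_{j}(x) \cdot 1^{2} = \sum_{j} h_{j}(x)$, the scalar inequality~\eqref{eq:cs_params} specializes to the means inequality weighted by the total mass of $u$, which appears as a factor on the right-hand side of the synthetic statement.

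In the string-diagrammatic reformulation, the $g$-branch of the Cauchy--Schwarz diagram collapses under the comonoid axioms: on the left-hand side, the leg labelled $g$ becomes a bare discard that is absorbed by a neighbouring copier via counitality of $\delta$, leaving the expected occurrence of $u \dcomp v$; on the right-hand side, the comparator $\mu$ applied to the two $g$-copies reduces, via the identity $\mu \dcomp \discard = \cap$, to a \kl{cap} on the $g$-side, which then simplifies further under the partial Frobenius axioms.

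What remains is a routine diagrammatic cleanup to bring the specialized inequality into the exact shape of $\meansinequalityFig$. The only mildly subtle point is bookkeeping: one has to track how the comparator on the discarded branch interacts with the surrounding copiers before the Frobenius axioms clear it away. No ingredients beyond the \kl{Cauchy--Schwarz inequality} itself and the comonoid/partial-Frobenius axioms already in use in Section~\ref{sec:preliminaries} are required, so the main (minor) obstacle is purely notational rather than conceptual.
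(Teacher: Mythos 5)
Your instantiation $h:=u$, $f:=v$, $g:=\discard_A$ is exactly the one the paper uses, and the left-hand side does collapse equationally as you describe. The gap is on the right-hand side. With $g=\discard_A\colon A\to I$, the multiplication on the $g$-branch is $\mu_I=\id_I$ (not a \kl{cap} on $A$ --- your appeal to $\mu\dcomp\discard=\cap$ is misplaced here), so that branch reduces by counitality to the effect $u\dcomp\discard_A\colon X\to I$, the total mass of $u$. This effect does \emph{not} vanish under the comonoid or partial Frobenius axioms: in a \kl{partial Markov category} $u$ need not be "total", so $u\dcomp\discard_A\neq\discard_X$ in general (in $\FinSubStoch$ it is the scalar $\sum_j u_j(x)\leq 1$, which genuinely multiplies the right-hand side). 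Consequently the specialized Cauchy--Schwarz inequality only gives the means inequality with this extra factor attached, and your claim that ``no ingredients beyond Cauchy--Schwarz and the comonoid/partial-Frobenius axioms are required'' is false.

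To finish, you need one further \emph{inequational} step: subunitality of the \kl{conditional preorder} (Proposition~\ref{prop:subunitality}), which gives $u\dcomp\discard_A\leqprob\discard_X$, combined with monotonicity of composition and tensor (Theorem~\ref{thm:enrichment}) and counitality to absorb the resulting $\discard_X$ into the copier. This is precisely step \emph{(ii)} of the paper's proof. It is a small addition --- subunitality itself follows from counitality --- but it is a step in the preorder, not an equation, and omitting it leaves the argument incomplete.
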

\begin{proof}
  Let \(h = u\), \(f = v\) and \(g = \discard\). We apply \emph{(i)} the
  \kl{Cauchy--Schwarz inequality} and \emph{(ii)}
  Proposition~\ref{prop:subunitality}.
  \[\cauchyschwarzmeansinequalitiesProofFig{}\]
\end{proof}

\begin{remark}\textbf{(QM-AM inequality)}\label{prop:qmam:means}
    The means inequality owes its name to the inequality between the 
   weighted arithmetic and quadratic means of a vector of positive 
   real numbers, $x₁,\dots,xₙ$
  with weights $w₁ + \dots + wₙ = 1$.
  $$ \left( \sum\nolimits_{i ∈ I} w_i · x_i  \right)^2 ≤ \sum\nolimits_{i ∈ I} w_i · x_i^2.$$
\end{remark}

\begin{example}
  The \kl{partial Markov category} of substochastic channels, 
  \(\FinSubStoch\), satisfies the \kl{means inequality}.
  In particular, if $f$ is a state $\sigma : I \to X$, and $g$ is a predicate $p : X \to I$, then the means inequality gives the well-known inequality between expectations of random variables, usually written as $E[X]^2 \le E[X^2]$.
\end{example}

With this setup, we can work towards a synthetic analogue of 
the validity-increase theorem. We start by providing the relevant 
definitions.

\begin{definition}\textbf{(Validity)}\label{def:validity}\defining{linkValidity}
  In a \kl{partial Markov category}, the \emph{validity} of a predicate $p ፡ X → I$
  with respect to a prior $σ ፡ I → X$ is the scalar 
  obtained by composition,
  \((\sigma ⨾ p) : I \to I\). 
\end{definition}

\noindent
We first prove a special case of the validity increase theorem for deterministic evidence.

\begin{proposition}
	In a \kl{partial Markov category}, updating a prior state $σ : I → X$ with a
	 deterministic predicate, $p : X \to I$, increases the predicate's posterior
	 validity. That is,  
	\[ σ ⨾ p \leqprob p^\dagger_σ ⨾ p. \]
\end{proposition}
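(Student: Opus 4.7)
The inequality to establish concerns two scalars $\sigma \comp p$ and $p^\dagger_\sigma \comp p$, both morphisms $I \to I$. By Definition \ref{def:preorder-partial-markov} it suffices to exhibit a scalar witness $r \colon I \to I$ such that $\sigma \comp p = (p^\dagger_\sigma \comp p) \condcomp r$. For parallel scalars, $\condcomp$ degenerates to ordinary scalar multiplication, so the goal reduces to writing $\sigma \comp p$ as a product $(p^\dagger_\sigma \comp p) \cdot r$ for some scalar $r$.

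My starting point is the defining equation of the Bayesian update,
\[
  \sigma \comp \delta_X \comp (p \otimes \id_X) \;=\; (\sigma \comp p) \condcomp p^\dagger_\sigma,
\]
which exhibits $p^\dagger_\sigma$ as a (quasi-total) conditional of the joint morphism $\sigma \comp \delta_X \comp (p \otimes \id_X) \colon I \to I \otimes X$. The key move is to post-compose both sides by $(\id_I \otimes p) \colon I \otimes X \to I \otimes I \cong I$ and simplify each side.

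On the left, post-composition yields $\sigma \comp \delta_X \comp (p \otimes p)$, and here the determinism of $p$ does the decisive work: since $p$ is deterministic and $\delta_I = \id_I$, the equation $\delta_X \comp (p \otimes p) = p \comp \delta_I = p$ collapses the left-hand side to $\sigma \comp p$. On the right, unfolding the definition of $\condcomp$ shows that post-composition of $(\sigma \comp p) \condcomp p^\dagger_\sigma$ by $\id_I \otimes p$ produces $(\sigma \comp p) \condcomp (p^\dagger_\sigma \comp p)$, which, being a conditional composition of two scalars, simplifies to the plain scalar product $(\sigma \comp p) \cdot (p^\dagger_\sigma \comp p)$.

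Equating the two sides yields $\sigma \comp p = (p^\dagger_\sigma \comp p) \cdot (\sigma \comp p)$, i.e.\ $\sigma \comp p = (p^\dagger_\sigma \comp p) \condcomp (\sigma \comp p)$, exhibiting $r = \sigma \comp p$ as a witness of $\sigma \comp p \leqprob p^\dagger_\sigma \comp p$. I do not expect a serious obstacle: the argument is essentially a single string-diagrammatic manipulation in which determinism of $p$ is used exactly once to fuse the two copies of $p$ produced by $\delta_X$. The only care needed is in checking that post-composition by $\id_I \otimes p$ passes cleanly through $\condcomp$ on the right-hand side, which is immediate from the definition of $\condcomp$.
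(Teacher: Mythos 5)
Your proof is correct and follows essentially the same route as the paper's: both start from the conditional equation $\sigma \comp \delta_X \comp (p \otimes \id_X) = (\sigma \comp p) \condcomp p^\dagger_\sigma$ defining the Bayesian inverse, post-compose with $p$, and use determinism of $p$ exactly once to collapse $\delta_X \comp (p \otimes p)$ to $p$. The only (cosmetic) difference is that you extract the explicit witness $r = \sigma \comp p$ by hand, whereas the paper phrases the last step as an appeal to monotonicity of post-composition from Theorem~\ref{thm:enrichment}.
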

\begin{proof}
  The \kl{conditional} equation for the \kl{bayesian inverse} is $σ ⨾ 
  \copier_X
  ⨾ (p ⊗ \id_X) = (σ ⨾ p) ⊗ p^\dagger_σ$. Therefore, $σ ⨾ \copier_X ⨾ (p ⊗
  \id_X) \leqprob p^\dagger_σ$; and, as a consequence, \(σ ⨾ p = σ ⨾ \copier_X ⨾ (p ⊗ p) \leqprob
  p^\dagger_σ ⨾ p\).
  \[\validityincreasedeterministicProofFig{}\]
\end{proof}

When the predicate $p$ is not necessarily \kl{deterministic}, we will 
need the \kl{means inequality} and the following notion of zero 
scalars.	

\begin{definition}
  Let $\cat{C}$ be a \kl{copy-discard category}. 
  \begin{enumerate}
		\item A scalar $s : I \to I$ is called a \intro{zero scalar}, if it
		satisfies $s ⊗ f  = s ⊗ g$ for all $f,g : X → Y$. Any scalar that does
		not satisfy this property is said to be \AP\intro{non-zero}.
		
		\item We say that the non-zero scalars in $\cat{C}$ are
		\AP\intro{cancellative} if $s ⊗ f = s ⊗ g$ implies $f = g$ for all $f,g
		: X → Y$ and non-zero scalar $s : I → I$.
  \end{enumerate}
\end{definition}

\begin{example}
  We show that non-zero scalars in $\BorelSubStoch$ are cancellative.
	Scalars in $\BorelSubStoch$ correspond to elements of the unit interval $[0,1]$.
  Given a subprobability
	kernel $f : (X,\sigma_X) \to (Y,\sigma_Y)$, 
	the tensor product $s \otimes f$ is given by
	pointwise multiplication with $s \in [0,1]$. That is for $x \in 
	X$ and $A \in \sigma_Y$,
	\[ (s \otimes f)(A\given x) = s \cdot f(A \given x). \]
	Clearly, if $s \ne 0$, then $s \otimes f = s \otimes g$ implies 
	$f = g$ for parallel subprobability kernels $f$ and $g$. 
	Similarly, non-zero scalars in $\FinSubStoch$ are also 
	cancellative.
\end{example}

\begin{lemma}
  Let $\cat{C}$ be a \kl{copy-discard category} whose \kl{non-zero} scalars are
  \kl{cancellative}. Then, the following cancellation property holds for the
  "conditional inequality". If $s : I → I$ is a non-zero scalar and $f,g : X
  → Y$, then $s ⊗ f \leqprob s ⊗ g$ implies $f \leqprob g$.
\end{lemma}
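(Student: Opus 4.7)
Suppose $s \otimes f \leqprob s \otimes g$ with witness $r : Y \otimes X \to I$, so by definition
\[ s \otimes f = (s \otimes g) \condcomp r. \]
The strategy is to show that the scalar $s$ factors out of the conditional composition on the right-hand side, and then appeal to cancellativity of non-zero scalars.

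First I would unfold the conditional composition using the definition in Figure~\ref{fig:conditionalComposition}:
\[ (s \otimes g) \condcomp r = \delta_X \comp (((s \otimes g) \comp \delta_Y) \otimes \id_X) \comp (\id_Y \otimes r). \]
The key observation is that the scalar $s : I \to I$ does not interact with the copy structure on $X$ or $Y$, nor with the morphisms $g$ and $r$ themselves: string-diagrammatically, $s$ is a freely floating ``puck'' disjoint from the rest of the diagram. Consequently, we can slide $s$ outside the entire conditional composition to get
\[ (s \otimes g) \condcomp r = s \otimes (g \condcomp r). \]
This is most transparent in string-diagrammatic notation; formally, it uses only the interchange law of the symmetric monoidal structure together with the fact that scalars are morphisms on the monoidal unit.

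Combining the two equations, we obtain $s \otimes f = s \otimes (g \condcomp r)$. Since $s$ is non-zero by hypothesis and non-zero scalars in $\cat{C}$ are cancellative, this forces $f = g \condcomp r$, which exhibits $r$ as a witness of $f \leqprob g$.

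The only subtlety is justifying the pull-out identity $(s \otimes g) \condcomp r = s \otimes (g \condcomp r)$; once the string diagram is drawn, it is essentially automatic, but writing it out symbolically requires a small bookkeeping argument with the interchange law. I expect this to be the sole routine obstacle, and no further hypothesis on $\cat{C}$ (such as being \kl{balanced} or having \kl{conditionals}) is needed.
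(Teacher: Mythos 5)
Your proposal is correct and follows the paper's own proof essentially verbatim: both pull the scalar out of the conditional composition via $(s \otimes g) \condcomp r = s \otimes (g \condcomp r)$ and then apply cancellativity to conclude $f = g \condcomp r$. The pull-out identity you flag as the only subtlety is indeed just interchange plus triviality of the comonoid on $I$, and the paper asserts it without further comment.
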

\begin{proof}
	Let $s ⊗ f \leqprob_r s ⊗ g$. Then, by definition $f ⊗ s = (s ⊗ g) \condcomp
	r = s ⊗ (g \condcomp r)$. By cancellativity, $f = g \condcomp r$, and thus
	$f \leqprob_r g$.
\end{proof}

\begin{theorem}
  \label{thm:validity_increase}
  Let $\cat{C}$ be a \kl{partial Markov category} that satisfies the 
  \kl{means inequality} and whose "non-zero" scalars are "cancellative".
  Then, updating a prior state with a predicate increases the 
  predicate's validity in the posterior. That is, for all $σ : I 
  → X$ and $p : X → I$,
  \[σ ⨾ p \leqprob p^\dagger_{σ} ⨾ p.\]
\end{theorem}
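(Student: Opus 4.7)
The plan is to combine the defining equation of the Bayesian inverse with the \kl{means inequality}, and then cancel a common factor using the cancellativity assumption on non-zero scalars.

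First I would unfold what the \kl{bayesian inverse} gives us. By definition, \(p^\dagger_\sigma\) is a \kl{conditional} of \(\sigma \dcomp \cp_X \dcomp (p \tensor \id_X)\), so the conditional equation (using that the marginal on \(I\) is just \(\sigma \dcomp p\)) reads
\[ \sigma \dcomp \cp_X \dcomp (p \tensor \id_X) \;=\; (\sigma \dcomp p) \tensor p^\dagger_\sigma. \]
Post-composing both sides with \(\id_I \tensor p\) yields the key identity
\[ \sigma \dcomp \cp_X \dcomp (p \tensor p) \;=\; (\sigma \dcomp p) \tensor (p^\dagger_\sigma \dcomp p). \]

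Next I would apply the \kl{means inequality} to \(u = \sigma \colon I \to X\) and \(v = p \colon X \to I\). This gives
\[ (\sigma \dcomp p) \tensor (\sigma \dcomp p) \;\leqprob\; \sigma \dcomp \cp_X \dcomp (p \tensor p), \]
which together with the previous equality produces
\[ (\sigma \dcomp p) \tensor (\sigma \dcomp p) \;\leqprob\; (\sigma \dcomp p) \tensor (p^\dagger_\sigma \dcomp p). \]

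Finally I would split on whether the scalar \(\sigma \dcomp p\) is a \kl{zero scalar}. If it is non-zero, the cancellation lemma for the \kl{conditional inequality} applied to the displayed inequality immediately yields \(\sigma \dcomp p \leqprob p^\dagger_\sigma \dcomp p\). If \(\sigma \dcomp p\) is a \kl{zero scalar}, then by definition \((\sigma \dcomp p) \tensor f = (\sigma \dcomp p) \tensor g\) for any parallel \(f,g\); in particular \(\sigma \dcomp p = (\sigma \dcomp p) \tensor (p^\dagger_\sigma \dcomp p)\), exhibiting \(\sigma \dcomp p\) itself as a witness of \(\sigma \dcomp p \leqprob p^\dagger_\sigma \dcomp p\).

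The main subtlety is that the \kl{means inequality}, when read synthetically, naturally produces the squared form \((\sigma \dcomp p) \tensor (\sigma \dcomp p)\) rather than \(\sigma \dcomp p\) directly; all the real work therefore sits in the cancellation step, which is precisely why the hypothesis of \kl{cancellative} non-zero scalars is needed. The deterministic-predicate case proved just before does not need this because determinism lets one fold the two copies of \(p\) back into one through \(p \dcomp \delta_I = \delta_X \dcomp (p \tensor p)\), an option that is unavailable in general.
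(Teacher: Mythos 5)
Your proposal is correct and follows essentially the same route as the paper's proof: apply the means inequality to \(\sigma\) and \(p\), rewrite \(\sigma \dcomp \cp_X \dcomp (p \tensor p)\) as \((\sigma \dcomp p) \tensor (p^\dagger_\sigma \dcomp p)\) via the conditional equation for the Bayesian inverse, and then split on whether the scalar \(\sigma \dcomp p\) is zero or cancellative. Your explicit treatment of the zero-scalar case (witnessing the inequality by \(\sigma \dcomp p\) itself) spells out a step the paper only gestures at.
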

\begin{proof}
  We apply the \kl{means inequality} in the following calculation.
  \[\includegraphics{fig-validityincreaseProof.pdf}\]
  Since the non-zero scalars are cancellative, we conclude that \(σ ⨾ p \leqprob
  \bayesinv{p}{σ} ⨾ p\), because the scalar is either zero or cancellative.
\end{proof}

Notably, $\FinSubStoch$ satisfies the assumptions of the previous theorem.
Therefore, we have synthetically recovered Jacobs' ``updates increase validity''
\cite[Theorem 2.1]{jacobs2019learning}. Also observe that the proof of Theorem
\ref{thm:validity_increase} does not rely on the full power of the means
inequality. The instance we used in the proof of Theorem
\ref{thm:validity_increase} holds in $\BorelSubStoch$, as the following example
demonstrates.
\begin{example}
	Let $(X,\sigma_X)$ be standard Borel, and $\mu$ a subprobability 
	measure on $X$. The collection of $\mu$-integrable functions $X 
	\to \Real$, quotiented by $\mu$-almost-sure equality, forms a real
	inner product space with inner product $\langle 
	p,q\rangle = \int p(x) \cdot q(x) \cdot \mu(dx)$. Therefore, one 
	can apply the Cauchy--Schwarz inequality to get
	\[ \Bigl(\int p(x) \cdot \mu(dx)\Bigr)^2 = \langle 
	p,\discard_{X}\rangle^2 \le \langle p,p\rangle \cdot \langle 
	\discard_{X},\discard_{X} \rangle = \Bigl(\int p(x)^2 \cdot 
	\mu(dx)\Bigr) \cdot 
	\Bigl( \int 1 \cdot \mu(dx) 
	\Bigr) \le \int p(x)^2 \cdot 
	\mu(dx). \]
\end{example}

\section{Concluding remarks}%
\label{sec:concludingRemarks}%

We introduced a novel preorder enrichment in the general setting of
\partialMarkovCategories{}; we showed that it generalizes the canonical preorder
enrichments of two categories of probability kernels, \kl{cartesian restriction
categories}, and \kl{cartesian bicategories of relations}. We showed how
comparators relate to least conditionals of copier maps. We axiomatized the
\kl{Cauchy--Schwarz inequality} and the quadratic-arithmetic \kl{means
inequality}. From these, we abstractly derived the fact that Bayesian updating
increases the validity of the evidence in the posterior.

Further exploration of the basic theory of the \conditionalInequality{} is
warranted. One interesting question is about antisymmetry of the order, which is
related to inverses of scalars. An inverse of $s : I → I$ is a scalar $s^{-1} :
I → I$ such that $s ⊗ s^{-1} = \discard_I$. Clearly, \conditionalInequality{} is
antisymmetric only if the only scalar with an inverse is 
$\discard_I$. It is an open question whether the converse statement 
holds. It would also be
interesting to find necessary conditions for the "conditional preorder" to form
an enrichment. Relatedly, the synthetic formulation of the \kl{Cauchy--Schwarz
inequality} and the means inequality opens the way for a more refined
axiomatization: can we derive both from some more fundamental principle?

Finally, a potential application of the synthetic \kl{means inequality} lies in
mixing it with the additive structure of some \kl{partial Markov categories},
like effectuses with copiers \cite{Cho2015}. In that setting, it may be possible
to develop a synthetic theory of variance and covariance. 

\subsection*{Acknowledgements}

We want to thank \textsf{Dario Stein} for providing the example of standard
Borel spaces (\S \ref{ex:standardBorel}). We thank the anonymous reviewers for
much constructive and detailed feedback; we thank an anonymous reviewer for
pointing out the recent work on quasi-Markov categories.

\textit{Funding:} Elena Di Lavore, Mario Román, and Paweł Sobociński were
supported by the Advanced Research + Invention Agency (ARIA) Safeguarded AI
Programme. Paweł Sobociński was additionally supported by European Union and
Estonian Research Council via grants PRG1215 and TEM-TA5 and the Estonian Center
of Excellence in Artificial Intelligence.  Mario Román was additionally
supported by the Air Force Office of Scientific Research (AFOSR) award number
FA9550-21-1-0038. This article is based upon work from COST Action EuroProofNet,
CA20111, supported by COST (European Cooperation in Science and Technology).

\bibliographystyle{entcs}
\bibliography{main}

\newcommand{\etalchar}[1]{$^{#1}$}
\begin{thebibliography}{00}
\expandafter\ifx\csname url\endcsname\relax
  \def\url#1{\texttt{#1}}\fi
\expandafter\ifx\csname doi\endcsname\relax
  \def\doi#1{\burlalt{doi:#1}{http://dx.doi.org/#1}}\fi
\expandafter\ifx\csname urlprefix\endcsname\relax\def\urlprefix{URL }\fi
\expandafter\ifx\csname href\endcsname\relax
  \def\href#1#2{#2}\fi
\expandafter\ifx\csname burlalt\endcsname\relax
  \def\burlalt#1#2{\href{#2}{#1}}\fi

\bibitem{AckermanFKKMRSY24}
Nathanael~L. Ackerman, Cameron~E. Freer, Younesse Kaddar, Jacek Karwowski, Sean~K. Moss, Daniel~M. Roy, Sam Staton, and Hongseok Yang.
\newblock Probabilistic programming interfaces for random graphs: Markov categories, graphons, and nominal sets.
\newblock {\em Proc. {ACM} Program. Lang.}, 8({POPL}):1819--1849, 2024.
\newblock \\ \doi{10.1145/3632903}.

\bibitem{Fritz2024}
Leihao Chen, Tobias Fritz, Tom{\'{a}}s Gonda, Andreas Klingler, and Antonio Lorenzin.
\newblock The {A}ldous-{H}oover theorem in categorical probability.
\newblock {\em CoRR}, abs/2411.12840, 2024, \burlalt{2411.12840}{http://arxiv.org/abs/2411.12840}.
\newblock \\ \doi{10.48550/ARXIV.2411.12840}.

\bibitem{corradini1999algebraic}
Andrea Corradini and Fabio Gadducci.
\newblock An {Algebraic} {Presentation} of {Term} {Graphs}, via {GS}-{Monoidal} {Categories}.
\newblock {\em Applied Categorical Structures}, 7(4):299--331, 1999.
\newblock \\ \doi{10.1023/A:1008647417502}.

\bibitem{cockettGarner14}
J.~Robin~B. Cockett and Richard Garner.
\newblock Restriction categories as enriched categories.
\newblock {\em Theor. Comput. Sci.}, 523:37--55, 2014.
\newblock \\ \doi{10.1016/J.TCS.2013.12.018}.

\bibitem{Cho2019a}
Kenta Cho.
\newblock Effectuses in categorical quantum foundations.
\newblock {\em CoRR}, abs/1910.12198, 2019, \burlalt{1910.12198}{http://arxiv.org/abs/1910.12198}.
\newblock \\ \url{http://arxiv.org/abs/1910.12198}.

\bibitem{cho2019disintegration}
Kenta Cho and Bart Jacobs.
\newblock Disintegration and {{Bayesian Inversion}} via {{String Diagrams}}.
\newblock {\em Mathematical Structures in Computer Science}, pages 1--34, March 2019, \burlalt{1709.00322}{http://arxiv.org/abs/1709.00322}.
\newblock \\ \doi{10.1017/S0960129518000488}.

\bibitem{Cho2015}
Kenta Cho, Bart Jacobs, Bas Westerbaan, and Abraham Westerbaan.
\newblock An introduction to effectus theory.
\newblock {\em CoRR}, abs/1512.05813, 2015, \burlalt{1512.05813}{http://arxiv.org/abs/1512.05813}.
\newblock \\ \url{http://arxiv.org/abs/1512.05813}.

\bibitem{cockett2002restriction}
J.~Robin~B. Cockett and Stephen Lack.
\newblock Restriction categories {I:} categories of partial maps.
\newblock {\em Theoretical Computer Science}, 270(1-2):223--259, 2002.
\newblock \\ \doi{10.1016/S0304-3975(00)00382-0}.

\bibitem{cockett2003restriction}
J.~Robin~B. Cockett and Stephen Lack.
\newblock Restriction categories {II:} partial map classification.
\newblock {\em Theoretical Computer Science}, 294(1-2):61--102, 2003.

\bibitem{cockett2007restriction}
J.~Robin~B. Cockett and Stephen Lack.
\newblock Restriction categories {III}: colimits, partial limits and extensivity.
\newblock {\em Mathematical Structures in Computer Science}, 17(4):775--817, 2007.

\bibitem{CARBONI198711}
Aurelio Carboni and Robert~F.C. Walters.
\newblock Cartesian bicategories {I}.
\newblock {\em Journal of Pure and Applied Algebra}, 49(1):11--32, 1987.
\newblock \\ \doi{https://doi.org/10.1016/0022-4049(87)90121-6}.

\bibitem{2023partialmarkov}
Elena {Di Lavore} and Mario Rom{\'a}n.
\newblock Evidential decision theory via partial {Markov} categories.
\newblock In {\em 2023 38th Annual ACM/IEEE Symposium on Logic in Computer Science (LICS)}, pages 1--14, 2023.
\newblock \\ \doi{10.1109/LICS56636.2023.10175776}.

\bibitem{Fritz2023}
Tobias Fritz, Tom{\'{a}}s Gonda, Antonio Lorenzin, Paolo Perrone, and Dario Stein.
\newblock Absolute continuity, supports and idempotent splitting in categorical probability.
\newblock {\em CoRR}, abs/2308.00651, 2023, \burlalt{2308.00651}{http://arxiv.org/abs/2308.00651}.
\newblock \\ \doi{10.48550/ARXIV.2308.00651}.

\bibitem{fritz2025empiricalmeasuresstronglaws}
Tobias Fritz, Tomáš Gonda, Antonio Lorenzin, Paolo Perrone, and Areeb~Shah Mohammed.
\newblock Empirical measures and strong laws of large numbers in categorical probability, 2025, \burlalt{2503.21576}{http://arxiv.org/abs/2503.21576}.
\newblock \urlprefix\url{https://arxiv.org/abs/2503.21576}.

\bibitem{FritzK23}
Tobias Fritz and Andreas Klingler.
\newblock The d-separation criterion in categorical probability.
\newblock {\em J. Mach. Learn. Res.}, 24:46:1--46:49, 2023.
\newblock \\ \url{https://jmlr.org/papers/v24/22-0916.html}.

\bibitem{fritz2020synthetic}
Tobias Fritz.
\newblock A synthetic approach to {{Markov}} kernels, conditional independence and theorems on sufficient statistics.
\newblock {\em Advances in Mathematics}, 370:107239, 2020, \burlalt{1908.07021}{http://arxiv.org/abs/1908.07021}.
\newblock \\ \url{http://arxiv.org/abs/1908.07021}.

\bibitem{Jacobs18}
Bart Jacobs.
\newblock From probability monads to commutative effectuses.
\newblock {\em J. Log. Algebraic Methods Program.}, 94:200--237, 2018.
\newblock \\ \doi{10.1016/J.JLAMP.2016.11.006}.

\bibitem{jacobs2019learning}
Bart Jacobs.
\newblock Learning along a channel: the expectation part of expectation-maximisation.
\newblock {\em Electronic Notes in Theoretical Computer Science}, 347:143--160, 2019.
\newblock \\ \doi{10.1016/j.entcs.2019.09.008}.
\newblock Proceedings of the Thirty-Fifth Conference on the Mathematical Foundations of Programming Semantics.

\bibitem{Jacobs2021}
Bart Jacobs.
\newblock Multinomial and hypergeometric distributions in markov categories.
\newblock In Ana Sokolova, editor, {\em Proceedings 37th Conference on Mathematical Foundations of Programming Semantics, {MFPS} 2021, Hybrid: Salzburg, Austria and Online, 30th August - 2nd September, 2021}, volume 351 of {\em {EPTCS}}, pages 98--115, 2021.
\newblock \\ \doi{10.4204/EPTCS.351.7}.

\bibitem{jacobs2021causal}
Bart Jacobs, Aleks Kissinger, and Fabio Zanasi.
\newblock Causal inference via string diagram surgery: A diagrammatic approach to interventions and counterfactuals.
\newblock {\em Mathematical Structures in Computer Science}, 31(5):553--574, 2021.

\bibitem{Johnstone14}
Peter~T Johnstone.
\newblock {\em Topos theory}.
\newblock Courier Corporation, 2014.

\bibitem{kelly1982basic}
Gregory~Maxwell Kelly.
\newblock {\em Basic concepts of enriched category theory}, volume~64.
\newblock CUP Archive, 1982.

\bibitem{Lack2010}
Stephen Lack.
\newblock {\em A 2-Categories Companion}, pages 105--191.
\newblock Springer New York, New York, NY, 2010.
\newblock \\ \doi{10.1007/978-1-4419-1524-5_4}.

\bibitem{Lorenz2023}
Robin Lorenz and Sean Tull.
\newblock Causal models in string diagrams.
\newblock {\em CoRR}, abs/2304.07638, 2023, \burlalt{2304.07638}{http://arxiv.org/abs/2304.07638}.
\newblock \\ \doi{10.48550/ARXIV.2304.07638}.

\bibitem{mac1998categories}
Saunders Mac~Lane.
\newblock {\em Categories for the working mathematician}, volume~5.
\newblock Springer Science \& Business Media, 1998.

\bibitem{mohammed2025}
Areeb~Shah Mohammed.
\newblock Partializations of markov categories, 2025, \burlalt{2509.05094}{http://arxiv.org/abs/2509.05094}.
\newblock \\ \url{https://arxiv.org/abs/2509.05094}.

\bibitem{Nester2024}
Chad Nester.
\newblock {\em Partial and Relational Algebraic Theories}.
\newblock PhD thesis, Tallinn University of Technology, 2024.

\bibitem{Perrone24}
Paolo Perrone.
\newblock Markov categories and entropy.
\newblock {\em {IEEE} Trans. Inf. Theory}, 70(3):1671--1692, 2024.
\newblock \\ \doi{10.1109/TIT.2023.3328825}.

\bibitem{selinger2010survey}
Peter Selinger.
\newblock A survey of graphical languages for monoidal categories.
\newblock In {\em New structures for physics}, pages 289--355. Springer, 2010.

\bibitem{stein2021compositional}
Dario Stein and Sam Staton.
\newblock Compositional semantics for probabilistic programs with exact conditioning.
\newblock In {\em 2021 36th Annual ACM/IEEE Symposium on Logic in Computer Science (LICS)}, pages 1--13. IEEE, 2021.

\bibitem{stein2021structural}
Dario~Maximilian Stein.
\newblock Structural foundations for probabilistic programming languages.
\newblock {\em University of Oxford}, 2021.

\bibitem{Vakar2018}
Matthijs Vákár and Luke Ong.
\newblock On s-finite measures and kernels, 2018, \burlalt{1810.01837}{http://arxiv.org/abs/1810.01837}.
\newblock \\ \url{https://arxiv.org/abs/1810.01837}.

\bibitem{Yin2022}
Yimu Yin and Jiji Zhang.
\newblock Markov categories, causal theories, and the do-calculus, 2022, \burlalt{2204.04821}{http://arxiv.org/abs/2204.04821}.
\newblock \\ \url{https://arxiv.org/abs/2204.04821}.

\end{thebibliography}
\end{document}